\begin{document}
\renewcommand{\thepage}{\arabic{page}}
\pagestyle{plain} 
\setcounter{page}{1}
\title{A Least Squares Estimation of a Hybrid log-Poisson Regression and its Goodness of Fit for Optimal Loss Reserves in Insurance}
\titlerunning{Hamiltonian Mechanics}  
%
\author{WOUNDJIAGU\'{E} Apollinaire \inst{1,4} \and Mbele Bidima Martin Le Doux\inst{2}
\and Waweru Mwangi Ronald \inst{3}}
\authorrunning{Apollinaire WOUNDJIAGU\'{E} et al.} 
%
\tocauthor{Apollinaire WOUNDJIAGU\'{E}, Martin Le Doux Mbele Bidima and Ronald Waweru Mwangi}
\institute{Institute of Basic Sciences Technology and Innovation, Pan African University-Jomo Kenyatta University of Agriculture and Technology, P.O. Box 62000-00200, Nairobi-Kenya,\\
\email{appoappolinaire@yahoo.fr},\\
\texttt{woundjiague.apollinaire@students.jkuat.ac.ke}
\and
Faculty of Science, University of Yaounde I, P.O.Box 812 Yaoundé, Cameroon
\and 
School of Computing and Information Technology, Jomo Kenyatta University of Agriculture and Technology, P.O. Box 62000-00200, Nairobi-Kenya
\and National Advanced School of Engineering, University of Maroua, PO.Box 46 Maroua, Cameroon}
\maketitle   
\begin{abstract}
 In this article, the parameters of a hybrid log-linear model (log-Poisson) are estimated using the fuzzy least-squares (FLS) procedures (Celmiņ\v{s}, 987a,b; D'Urso and Gastaldi, 2000; DUrso and Gastaldi, 2001). A goodness of fit have been derived in order to assess and compare this new model and the classical log-Poisson regression in loss reserving framework (Mack, 1991). Both the hybrid model and its goodness of fit are performed on a loss reserving data.

\keywords{fuzzy least squares, log-Poisson, goodness of fit, loss reserve,hybrid}
\end{abstract}
\section{Introduction}
An important role of a non-life actuary is the calculation of provisions, mainly Incurred But Not Reported reserve (IBNR). Then, finding the fair value of loss reserve is a relevant topic for non-life actuaries. Indeed, insurance companies must simultaneously have enough reserves to meet their commitment to policyholders and have enough funds for their investments. Therefore several methods have been proposed in actuarial science literature to capture this fair value.\par
In one hand, we distinguish deterministic methods (Bornhuetter and Ferguson, 1972; Taylor,
1986; Linnemann, 1984). They provide crisp predictions for reserves. In the other hand, Taylor et al.
(2003); W\"{u}thrich and Merz (2008); Mack (1991); England and Verrall (2002) present stochastic methods. Those methods don't give only a crisp value of the reserves but provide also their variability. But even stochastic methods have weakness.\par
In Straub and Swiss (1988), there are some experiences where stochastic methods can give unrealistic estimates. For example, when the claims are related to body injures, the future losses for the company will depend on the growth of the wage index that help to determine the amount of indemnity, and depends also on changes in court practices and public awareness of liability matters. Then the information is vague. Therefore the use of Fuzzy Set Theory becomes very attractive when the information is vague as in this case. \par 
de Andr\'{e}s S\'{a}nchez (2006); de Andr\'{e}s-S\'{a}nchez (2007, 2012); de Andr\'{e}s S\'{a}nchez (2014) present the interest of fuzzy regression models (FRM) in the calculus of loss reserves in insurance using the concept of expected value of a Fuzzy Number (FN) (de Campos Ib\'{a}\~{n}ez and Mu\~{n}oz,
1989). Asai (1982) is the first to develop a FRM where the coefficient are fuzzy numbers (Dubois
and Prade, 1988). In the case of loss reserving, FN are easy to handle arithmetically unlike in the case of classical regression where the coefficients are random variables and are not easy to handle arithmetically. Another difference between fuzzy regression and classical regression is in dealing with errors as fuzzy variables in fuzzy regression modelling while errors are considered as random residuals in classical regression. But to integrate both fuzziness and randomness into a regression model, one should think about hybrid regression models.\par
Then we have developed in our previous article a hybrid log-Poisson regression inspired from the FRM (Asai, 1982) and taking into account an optimize $h-$value in the linear program. However, the fuzzy parameters in this model are calculated through an optimization program and does not provide an explicit form of the estimated parameters.\par
In this article, we derive the exact form of the estimated fuzzy parameters of the hybrid log-Poisson regression by using the concept of fuzzy least-squares (Celmiņ\v{s}, 987a,b; D'Urso and
Gastaldi, 2000; DUrso and Gastaldi, 2001). We develop a goodness of fit index to assess this new model and to compare it with the classical one. A numerical application on a loss reserving data will be provided.\par
The article is organized as follows: We first present some definitions and preliminaries concepts in fuzzy logic as the first section; In the second section, we derive a least squares estimation of the log-Poisson Model; The new estimation procedure for the hybrid log-Poisson regression is developed in section three; In the fourth and fifth section, a goodness of fit and the implementation algorithm of the hybrid model are respectively developed. Finally, a numerical example and a conclusion are suggested at the end of this paper.
\section{Preliminaries on Fuzzy Sets}
In this section, we review some concepts related to our research. That is the concept of fuzzy set, membership function, fuzzy number, weighted function of FN, fuzzy linear regression estimation according to least square approach.
\subsection{Review on some Definitions and Basic Properties of Fuzzy Sets}
\begin{definition}(Zadeh, 1965)\\
	Let $\Omega$ be a non empty set and $x \in \Omega$. In classical set theory, a subset $A$ of $\Omega$ can be defined by its characteristic function $\chi_{A}$ as a mapping from the elements of $\Omega$ to the elements of the set $\{0, 1\}$ ,
	\begin{equation}
	\chi_{A} : \Omega \longrightarrow \{ 0, 1\} 
	\end{equation}
	This mapping may be represented as a set of ordered pairs, with exactly one ordered pair present for each element of $\Omega$. The first element of the ordered pair is an element of the set $\Omega$, and the second element is an element of the set$\{0, 1\}$ . The value zero is used to represent non-membership, and the value one is used to represent membership. The truth or falsity of the statement "$x$ is in $A$" is determined by the ordered pair $(x, \chi_{A}(x) )$. The statement is true if the second element of the ordered pair is 1, and the statement is false if it is 0.\par
	Similarly, a fuzzy subset (also called fuzzy set) $\tilde{A}$ of a set $\Omega$ can be defined as a set of ordered pairs, each with the first element from $\Omega$, and the second element from the interval $[0, 1]$, with exactly one ordered pair present for each element of $\Omega$. This defines a mapping called membership function. 
\end{definition}

\begin{definition} (Zadeh, 1965)\\
	The membership function of a fuzzy set $\tilde{A}$, denoted by $\mu_{\tilde{A}}$ is defined by
	\begin{equation}
	\mu_{\tilde{A}} : \Omega \longrightarrow [0,1]
	\end{equation}
	where $\mu_{\tilde{A}}$ is typically interpreted as the membership degree of element $x$ in the fuzzy set $\tilde{A}$.\par
	The degree to which the statement " $x$ is in $\tilde{A}$" is true is determined by finding the ordered pair $(x,\mu_{\tilde{A}}(x))$. The degree of truth of the statement is the second element of the ordered pair. A fuzzy set (Zadeh, 1965) $\tilde{A}$ on $\Omega$ can also be defined as a set of tuples:
	\begin{equation}
	\tilde{A} = \{(x,\mu_{\tilde{A}}(x)) \mid x \in \Omega \}.
	\end{equation}
	and could be represented by a graphic.
\end{definition}

\begin{definition}(Dubois and Prade, 1978)\\
	Let $\Omega$ be the set of objects and $\tilde{A} \subset\Omega.$ The $\alpha-$cut $\tilde{A}_{\alpha}$ of $\tilde{A}$ is the set defined by
	\begin{equation*}
	\tilde{A}_{\alpha} = \{x \in \Omega, \mu_{\tilde{A}}(x) \geqslant \alpha \}.  
	\end{equation*}	
\end{definition}

\begin{definition} (Dubois and Prade, 1988)
	\begin{enumerate}
		\item 
		A fuzzy number $\tilde{A}$ is a fuzzy set of a universe $\Omega$ (the real line $\mathbb{R}$) such that :
		\begin{enumerate}
			\item[a.] 
			all its $\alpha-$cut are convex which is equivalent to $\tilde{A}$ is	convex, that is $\forall x_{1}, x_{2} \in \mathbb{R}$ and $\lambda \in [0,1], \quad \mu_{\tilde{A}}(\lambda x_{1} + (1-\lambda) x_{2}) \geqslant \min ( \mu_{\tilde{A}}(x_{1}),\mu_{\tilde{A}}(x_{2}) )$;
			\item[b.]
			$\tilde{A}$ is	normalized, that is $\exists x_{0} \in \Omega$ such that $\mu_{\tilde{A}}(x_{0})=1.$
			\item[c.]
			$\mu_{\tilde{A}}$ is continued membership function of bounded support, where $\Omega=\mathbb{R}$ and $[0,1]$ are equipped with the natural topology.	
		\end{enumerate}
		\item
		A triangular fuzzy number (TFN) $\tilde{\gamma}$ is a fuzzy number denoted by $\tilde{\gamma}=(\beta^{L}, \alpha^{c},\beta^{R}); \quad \beta^{L}, \alpha^{c},\beta^{R}\in \mathbb{R}$, such that $\mu_{\tilde{A}}(\beta^{L})=\mu_{\tilde{A}}(\beta^{R})=0$ and $\mu_{\tilde{A}}(\alpha^{c})=1$ with $\alpha^{c}$ the centre of $\tilde{\gamma}$, $\beta^{L}$ its left spread and $\beta^{R}$ its right spread (Lai and Hwang, 1992).\par
		A TFN $\tilde{\gamma}$ could be defined with its membership degree function $\mu_{\tilde{\gamma}}$ or, with its $h-$level ($\alpha-$ cut  ($h\in [0,1]$) $\gamma_{h}$ (see Dubois and Prade (1988)), i.e
		\begin{equation}\label{57}
		\mu_{\tilde{\gamma}}(x)= \left\{ \begin{array}{cl}
		1 - \dfrac{\alpha^{c} - x}{\beta^{L}} & \text{if}\quad \alpha - \beta^{L} < x \leqslant \alpha\\
		1 - \dfrac{x - \alpha^{c}}{\beta^{R}} & \text{if}\quad \alpha < x \leqslant \alpha +  \beta^{R}\\
		0 & \text{if} \quad \text{otherwise}
		\end{array}\right. 
		\end{equation}
		or
		\begin{equation}\label{58}
		\tilde{\gamma}_{h}=[\gamma_{L_{h}} , \gamma_{R_{h}}]=\big [\alpha^{c} - \beta^{L}(1-h), \alpha^{c} + \beta^{R} (1-h)  \big ]
		\end{equation}
		\begin{itemize}
			\item 
			If $\alpha^{c} - \beta^{L} = \beta^{R} - \alpha^{c}$, then $\tilde{\gamma}$ define a STFN
			\item
			Otherwise $\beta^{L} \neq \beta^{R}$, then $\tilde{\gamma}$ define an ATFN (see Figure \ref{fig2}).\\ 
			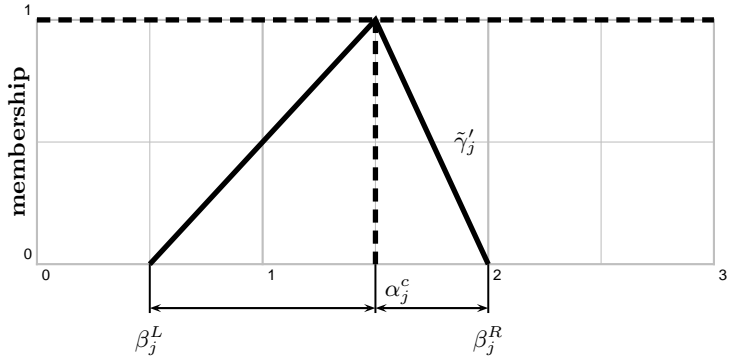
\begin{figure}[h!]
				\begin{center}
					\caption{Asymmetric triangular fuzzy coefficients $\tilde{\gamma}_{j}=(\beta_{j}^{L},\alpha^{c}_{j},\beta_{j}^{R})$}\label{fig2}
					\bigskip
					\psset{xunit=3cm,yunit=3.25cm}
					\begin{pspicture}(0,0)(3,1)
					\psgrid[gridcolor=lightgray,subgriddiv=2,%
					subgridcolor=lightgray,gridlabels=2mm](0,0)(3,1)
					\psline[linewidth=2pt](0.5,0)(1.5,1)(2,0)
					\psline[linewidth=2pt,linestyle=dashed](0,1)(3,1)
					\psline[linewidth=2pt,linestyle=dashed](1.5,1)(1.5,0)
					\rput{90}(-0.08,0.5){\textbf{membership}}
					\rput{0}(1.9,0.5){$\tilde{\gamma}^{\prime}_{j}$}
					\psline[linewidth=0.75pt](0.5,0)(0.5,-0.2)
					\psline[linewidth=0.75pt](1.5,0)(1.5,-0.2)
					\psline[linewidth=0.75pt](2,0)(2,-0.2)
					\pcline[offset=-12pt]{<->}(0.5,-0.05)(1.5,-0.05)
					\pcline[offset=-12pt]{<->}(1.5,-0.05)(2,-0.05)
					\uput[-90](0.5,-0.2){$\beta_{j}^{L}$}
					\uput[45](1.5,-0.2){$\alpha^{c}_{j}$}
					\uput[-90](2,-0.2){$\beta_{j}^{R}$}
					\end{pspicture}
				\end{center}\end{figure}
				\bigskip
			\end{itemize}
		\end{enumerate}
	\end{definition}
\paragraph*{Notes and Comments.}
It is well know that if $\tilde{A}$ is a fuzzy number, then $\tilde{A}_{h}$, the $h$ level ($\alpha-$cut) of $\tilde{A}$ is a compact set of $\mathbb{R}$, for all $h \in [0,1].$
\subsection{Review on Celmi{\c{n}}{\v{s}} Least Squares Model for Fuzzy Linear Regression}
In this subsection, we present the least-squares approach to estimated the fuzzy linear regression (Celmi\c{n}\v{s}, 987a,b; D'Urso and Gastaldi, 2000; DUrso and Gastaldi, 2001) rather than the possibilistic approach (Asai, 1982; Ishibuchi and Nii, 2001).\par
Let us consider crisp explanatory variables $x_{ij} (i=1,\ldots,n; j=1,\ldots,p)$ describing triangular fuzzy variables $\tilde{Y}_{i}=(Y^{L}_{i},Y^{c}_{i},Y^{R}_{i})$, where $Y^{c}_{i}$ is the centre of $\tilde{Y}_{i}$, $Y^{L}_{i}$ and $Y^{R}_{i}$ its left and right spreads respectively.\par
In matrix form, the fuzzy linear regression model between $\mathbf{X}$ (matrix of explanatory variables $x_{ij}$) and $\tilde{\mathbf{Y}}$ ( vector of dependent variables $\tilde{Y}_{i}$) can be written as following :
\begin{equation}\label{4}
\left\{ \begin{array}{cl}
Y^{c} & = Y^{c^{\ast}} + \varepsilon\\
Y^{L} & = Y^{L^{\ast}} + \xi \\
Y^{R} & = Y^{R^{\ast}} + \eta
\end{array}\right.
\end{equation}
where
\begin{equation}\label{5}
\left\{ \begin{array}{cl}
Y^{c^{\ast}} & = \mathbf{X}\bm{\beta}\\
Y^{L^{\ast}} & = Y^{c^{\ast}}\theta + \mathbf{1}\lambda \\
Y^{R^{\ast}} & = Y^{c^{\ast}}\delta + \mathbf{1}\mu
\end{array}\right.
\end{equation}
and 
\begin{description}
	\item[$\mathbf{X} :$] is a $n\times (p+1)$ matrix containing the vector $\mathbf{1}$ concatenated to $p$ crisp input variables.
	\item[$\bm{\beta} :$] is a $(p+1)\times 1$ vector of regression parameters for the regression model for $Y^c$.
	\item[$Y^{c} :$] is a $n\times 1$ vector of the observed centres.
	\item[$Y^{c^{\ast}} :$] is a $n\times 1$ vector of interpolated centres.
	\item[$Y^{L} :$] is a $n\times 1$ vector of observed left spreads.
	\item[$Y^{L^{\ast}} :$] is a $n\times 1$ vector of observed interpolated left spreads.
	\item[$Y^{R} :$] is a $n\times 1$ vector of observed right spreads.
	\item[$Y^{R^{\ast}} :$] is a $n\times 1$ vector of observed interpolated right spreads.
	\item[$\theta , \lambda, \delta , \mu :$] are the regression parameters for the regression model for $Y^{L}$ and $Y^{R}$.
	\item[$\mathbf{1} :$] $n\times 1$ vector of ones.
	\item[$\varepsilon, \xi, \eta :$] are $n\times 1$ vector of error terms.
\end{description}
\begin{remark}
Model \eqref{4} is based on 3 sub-models. The first interpolate the centre of $\tilde{Y}_{i}$. The two others are built over the first and yield the spreads.
\end{remark}
\begin{theorem}
The iterative least squares estimates $\hat{\bm{\beta}}, \hat{\theta}, \hat{\lambda}, \hat{\delta}$ and $\hat{\mu}$ of the parameters $\bm{\beta}, \theta, \lambda, \delta$ and $\mu$ in the system \eqref{4} are given by:
\begin{align}
\hat{\bm{\beta}} & = \dfrac{1}{\big( 1 + \theta^{2} + \delta^{2}\big)}\bigg[ (\mathbf{X}^{T}\mathbf{X})^{-1} \mathbf{X}^{T}\big( Y^{c} + (Y^{L} - \displaystyle\mathbf{1}\lambda)\theta + (Y^{R} - \displaystyle\mathbf{1}\mu)\delta \big)\bigg]\\
\hat{\theta} & = \big( \bm{\beta}^{T}\mathbf{X}^{T}\mathbf{X} \bm{\beta}\big)^{-1}\big( \bm{\beta}^{T}\mathbf{X}^{T}Y^{L} - \bm{\beta}^{T}\mathbf{X}^{T}\mathbf{1}\lambda\big)\\
\hat{\lambda} & = \dfrac{1}{n}\bigg( (Y^L)^{T} \mathbf{1} - \bm{\beta}^{T}\mathbf{X}^{T}\mathbf{1}\theta\bigg)\\
\hat{\delta} & = \big( \bm{\beta}^{T}\mathbf{X}^{T}\mathbf{X} \bm{\beta}\big)^{-1}\big( \bm{\beta}^{T}\mathbf{X}^{T}Y^{R} - \bm{\beta}^{T}\mathbf{X}^{T}\mathbf{1}\mu\big)\\
\hat{\mu} & = \dfrac{1}{n}\bigg( (Y^R)^{T} \mathbf{1} - \bm{\beta}^{T}\mathbf{X}^{T}\mathbf{1}\delta\bigg)
\end{align}
\end{theorem}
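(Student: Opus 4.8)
The plan is to treat the five equations \eqref{4}--\eqref{5} as a single least-squares problem and minimise the aggregate residual sum of squares
\begin{equation*}
S(\bm{\beta},\theta,\lambda,\delta,\mu) = \|Y^{c}-\mathbf{X}\bm{\beta}\|^{2} + \|Y^{L}-\mathbf{X}\bm{\beta}\theta-\mathbf{1}\lambda\|^{2} + \|Y^{R}-\mathbf{X}\bm{\beta}\delta-\mathbf{1}\mu\|^{2},
\end{equation*}
where $\|\cdot\|$ is the Euclidean norm and $\varepsilon=Y^{c}-\mathbf{X}\bm{\beta}$, $\xi=Y^{L}-\mathbf{X}\bm{\beta}\theta-\mathbf{1}\lambda$, $\eta=Y^{R}-\mathbf{X}\bm{\beta}\delta-\mathbf{1}\mu$. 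First I would write down the first-order (normal) equations by setting each block of the gradient to zero, $\partial S/\partial\bm{\beta}=\mathbf{0}$, $\partial S/\partial\theta=0$, $\partial S/\partial\lambda=0$, $\partial S/\partial\delta=0$, $\partial S/\partial\mu=0$, using the standard matrix-calculus identity $\partial\|\mathbf{v}-\mathbf{M}\mathbf{z}\|^{2}/\partial\mathbf{z}=-2\mathbf{M}^{T}(\mathbf{v}-\mathbf{M}\mathbf{z})$.

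Then I would solve each stationarity condition for the parameter against which it is differentiated, holding the remaining parameters fixed. The $\bm{\beta}$-equation collects three contributions, $-2\mathbf{X}^{T}\varepsilon-2\theta\mathbf{X}^{T}\xi-2\delta\mathbf{X}^{T}\eta=\mathbf{0}$; expanding and grouping the $\mathbf{X}^{T}\mathbf{X}\bm{\beta}$ terms yields $(1+\theta^{2}+\delta^{2})\mathbf{X}^{T}\mathbf{X}\bm{\beta}=\mathbf{X}^{T}\big(Y^{c}+\theta(Y^{L}-\mathbf{1}\lambda)+\delta(Y^{R}-\mathbf{1}\mu)\big)$, and left-multiplying by $(\mathbf{X}^{T}\mathbf{X})^{-1}/(1+\theta^{2}+\delta^{2})$ gives the stated $\hat{\bm{\beta}}$. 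The $\theta$-equation is $(\mathbf{X}\bm{\beta})^{T}\xi=0$, i.e. $\bm{\beta}^{T}\mathbf{X}^{T}\mathbf{X}\bm{\beta}\,\theta=\bm{\beta}^{T}\mathbf{X}^{T}Y^{L}-\bm{\beta}^{T}\mathbf{X}^{T}\mathbf{1}\lambda$, which gives $\hat{\theta}$ after dividing by the scalar $\bm{\beta}^{T}\mathbf{X}^{T}\mathbf{X}\bm{\beta}$; the $\lambda$-equation is $\mathbf{1}^{T}\xi=0$, i.e. $n\lambda=\mathbf{1}^{T}Y^{L}-\bm{\beta}^{T}\mathbf{X}^{T}\mathbf{1}\theta$, which gives $\hat{\lambda}$; and the computations for $\hat{\delta},\hat{\mu}$ are identical with $Y^{R},\delta,\mu$ in place of $Y^{L},\theta,\lambda$, since the $R$-block of $S$ has exactly the same form as the $L$-block.

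I would close by explaining the word \emph{iterative}: because of the bilinear couplings $\mathbf{X}\bm{\beta}\theta$ and $\mathbf{X}\bm{\beta}\delta$, the objective $S$ is not jointly convex in $(\bm{\beta},\theta,\lambda,\delta,\mu)$, so the five normal equations cannot be decoupled into a single closed form; they instead define a fixed-point map whose iteration (update $\bm{\beta}$ from the current $\theta,\lambda,\delta,\mu$, then update $\theta,\lambda,\delta,\mu$ from the new $\bm{\beta}$, and repeat) is precisely the estimation scheme. Each block update is nonetheless a genuine minimiser: $S$ is a convex quadratic in $\bm{\beta}$ for fixed $(\theta,\lambda,\delta,\mu)$ with Hessian $2(1+\theta^{2}+\delta^{2})\mathbf{X}^{T}\mathbf{X}$, which is positive definite, and it is a convex quadratic in each of the scalars as well, so the stationarity conditions are necessary and sufficient block-wise.

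The main obstacle I anticipate is purely the matrix/vector bookkeeping when $\bm{\beta}$ is a vector while $\theta,\lambda,\delta,\mu$ are scalars: one must recognise that $\bm{\beta}^{T}\mathbf{X}^{T}\mathbf{X}\bm{\beta}$ and $n$ are scalars (so the indicated ``inverses'' are ordinary reciprocals) and that $(\mathbf{X}^{T}\mathbf{X})^{-1}$ and $(\bm{\beta}^{T}\mathbf{X}^{T}\mathbf{X}\bm{\beta})^{-1}$ exist under the usual full-rank and nonzero assumptions implicit in the statement. A secondary subtlety is that the theorem characterises the estimates as the solution of a coupled system (a fixed point), not as a one-shot formula, so the proof legitimately stops at the normal equations and does not require a convergence analysis of the iteration.
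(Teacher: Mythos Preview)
Your proposal is correct and is precisely the standard derivation: the paper does not actually prove this theorem but simply cites D'Urso (2003), and the method you outline---form the aggregate sum of squares, set each partial derivative to zero, and solve block-wise to obtain the coupled fixed-point updates---is exactly the argument the paper writes out in full for the analogous result on the hybrid model (Theorem~\ref{pro1}). Your remarks on block-wise convexity and on why the scheme is necessarily iterative are accurate and in fact go slightly beyond what the paper records.
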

\begin{proof}
(see D'Urso (2003)).
\end{proof}
\section{A Least Squares Estimation of the log-Poisson Model}\label{s1}	
In this section, we provide a least squares estimation for the classical log-Poisson regression (Mack,
1991) in loss reserving framework.\par
Let Table \ref{tab1} be a \textit{run-off triangle}, where $Y_{ij}$ is the total loss regarding the underwriting period $i$ which have been paid with $j$ periods delay and the loss amounts $Y_{ij}$ with $i+j=k$ have been paid in calendar year $k\in \mathbb{N}$.\par
\begin{table}[!h]
	\centering
	\begin{tabular}{lclclclclcl}
		\toprule
		\multicolumn{8}{c}{\bf Development Year}\\ 
		& & \bf 0 & \bf 1 & \bf $\ldots$ & $l$ & $\ldots$ & $k-i$ & $\ldots$ & $k-1$ & $k$ \\ \hline
		\multirow{9}{6mm}{\rotatebox{90}{\bf Accident Year}} & \bf 0  & $Y_{0,0}$ & $Y_{0,1}$ & $\ldots$ & $Y_{0,l}$ & $\ldots$ & $Y_{0,k-i}$ & $\ldots$ & $Y_{0,k-1}$ & $Y_{0,k}$\\
		& \bf 1   & $Y_{1,0}$ & $Y_{1,1}$ & $\ldots$ & $Y_{1,l}$ & $\ldots$ & $Y_{1,k-i}$ & $\ldots$ & $Y_{1,k-1}$ \\
		& $\vdots$  & $\vdots$ & $\vdots$ & $\ldots$ & $\vdots$ & $\ldots$ & $\vdots$ \\
		& $i$   & $Y_{i,0}$ & $Y_{i,1}$ & $\ldots$ & $Y_{i,l}$ & $\ldots$ & $Y_{i,k-i}$\\
		& $\vdots$  & $\vdots$ & $\vdots$ & $\ldots$ & $\vdots$ \\
		& $k-l$  & $Y_{k-l,0}$ & $Y_{k-l,1}$ & $\ldots$ & $Y_{i,k-i}$\\
		& $\vdots$  & $\vdots$ & $\vdots$\\
		& $k-1$  & $Y_{k-1,0}$ & $Y_{k-1,1}$\\ 
		& $k$  & $Y_{k,0}$\\ \hline
	\end{tabular}
	\caption{Triangle of observed incremental payments}\label{tab1}
\end{table}	
$\{Y_{ij} : i=1,\ldots , k ; j=1,\ldots, k\}$ are assumed to be independent and $\log-$Poisson distributed (Mack, 1991), i.e 
\begin{equation}
Y_{ij} \sim \mathcal{P}(e^{\nu_{ij}})=\mathcal{P}(e^{\tau + \alpha_{i} + \gamma_{j}})
\end{equation} 
We suppose that $\{ Y_{ij} : i=1,\ldots, k; j=1,\ldots, k-i+1\}$ are modelled by
\begin{align}\label{1}
\nu_{ij} = \ln  \mathbb{E}(Y_{ij}) \Leftrightarrow & \mathbb{E}(Y_{ij}) = e^{X_{i}^{T} \bm{\beta}}\\ \nonumber
\Leftrightarrow & \ln  \mathbb{E}(Y_{ij})= X_{i}^{T} \bm{\beta} \\ \nonumber
\Leftrightarrow & \ln \mathbb{E}(\mathbf{Y})=\mathbf{X}\bm{\beta}\nonumber
\end{align}
where
\begin{description}
	\item[$\mathbf{X}_{(n\times p)}$] is the matrix of explanatory variables $x_{ij}$
	\item[$\mathbf{Y}_{(n\times 1)}$] is the vector of observations $Y_{ij}$
	\item[$\bm{\beta}_{(p\times 1)}$] is the vector of parameters $n=\frac{1}{2}k(k+1)$ and $p=2k-1.$
\end{description}
\textbf{\underline{Example}:} Let us consider a $3\times 3$ \textit{run-off triangle}.
Hence
\begin{equation*}
\ln \mathbb{E} \begin{pmatrix}
Y_{11}\\
Y_{21}\\
Y_{31}\\
Y_{12}\\
Y_{22}\\
Y_{13}
\end{pmatrix} = \begin{pmatrix}
1 & 0 & 0 & 0 & 0 \\
1 & 1 & 0 & 0 & 0 \\
1 & 0 & 0 & 1 & 0 \\
1 & 0 & 1 & 0 & 0 \\
1 & 1 & 1 & 0 & 0 \\
1 & 0 & 0 & 0 & 1 
\end{pmatrix}
\begin{pmatrix}
\tau \\
\alpha_{2}\\
\gamma_{2}\\
\alpha_{3}\\
\gamma_{3}
\end{pmatrix}
\end{equation*}
\eqref{1} can be written as 
\begin{equation}\label{2}
\mathbf{Y}=\bm{\varepsilon}e^{\mathbf{X}\bm{\beta}}
\end{equation}
where $\bm{\varepsilon}_{n\times 1}$ is the vector of errors terms $\varepsilon_{ij}$ such that $\varepsilon_{ij} \sim \mathcal{P}(1)$ and $\mathbb{E}(\varepsilon_{ij})=\mathbb{V}(\varepsilon_{ij})=1.$\par
\begin{equation}\label{3}
\eqref{2}\Rightarrow \ln (\mathbf{Y})=\ln (\bm{\varepsilon}) + \mathbf{X}\bm{\beta}	
\end{equation}
According to least squares method, we can estimate the vector of parameters $\beta$ by minimizing
\begin{align*}
S(\bm{\beta}) & =\big(\ln (\bm{\varepsilon})\big)^{T}\big(\ln (\bm{\varepsilon})\big)\\
                  & =\bigg( \ln (\mathbf{Y}) - \mathbf{X}\bm{\beta}\bigg)^{T}\bigg( \ln (\mathbf{Y}) - \mathbf{X}\bm{\beta}\bigg)\\
                  & =\bigg( \big[\ln (\mathbf{Y})\big]^{T} - \bm{\beta}^{T}\mathbf{X}^{T}\bigg)\bigg( \ln (\mathbf{Y}) - \mathbf{X}\bm{\beta}\bigg)\\
                  & = \big[\ln (\mathbf{Y})\big]^{T} \big[\ln (\mathbf{Y})\big] - \big[\ln (\mathbf{Y})\big]^{T}\mathbf{X}\bm{\beta} - \bm{\beta}^{T}\mathbf{X}^{T}\big[\ln (\mathbf{Y})\big] + \bm{\beta}^{T}(\mathbf{X}^{T} \mathbf{X})\bm{\beta} \\
                  & = \big[\ln (\mathbf{Y})\big]^{T} \big[\ln (\mathbf{Y})\big] - 2\big[\ln (\mathbf{Y})\big]^{T}\mathbf{X}\bm{\beta} + \bm{\beta}^{T}(\mathbf{X}^{T} \mathbf{X})\bm{\beta}
\end{align*}
\begin{proposition}
Let $Y_{ij}$ be the loss amounts underwriting period $i$ which have been paid with $j$ periods delay in a certain \textit{run-off triangle}.\par
We assume that
$\{ Y_{ij} : i=1,\ldots, k; j=1,\ldots, k-i+1\}$ are modelled by
\begin{align}
Y_{ij} & \sim \mathcal{P}(e^{\nu_{ij}})=\mathcal{P}(e^{\tau + \alpha_{i} + \gamma_{j}})\\\nonumber
\Leftrightarrow  & \nu_{ij} = \ln  \mathbb{E}(Y_{ij}) \Leftrightarrow \mathbb{E}(Y_{ij}) = e^{X_{i}^{T} \bm{\beta}}\\ \nonumber
\Leftrightarrow & \ln  \mathbb{E}(Y_{ij})= X_{i}^{T} \bm{\beta} \\ \nonumber
\Leftrightarrow & \ln \mathbb{E}(\mathbf{Y})=\mathbf{X}\bm{\beta}\\\nonumber
\Leftrightarrow  & \mathbf{Y}  =\bm{\varepsilon}e^{\mathbf{X}\bm{\beta}}\nonumber
\end{align}
where
\begin{description}
	\item[$\mathbf{X}_{(n\times p)}$] is the matrix of explanatory variables $x_{ij}$
	\item[$\mathbf{Y}_{(n\times 1)}$] is the vector of observations $Y_{ij}$
	\item[$\bm{\beta}_{(p\times 1)}$] is the vector of parameters $n=\frac{1}{2}k(k+1)$ and $p=2k-1.$
	\item[$ \bm{\varepsilon}_{n\times 1} $] is the vector of errors terms $\varepsilon_{ij}$ such that $\varepsilon_{ij} \sim \mathcal{P}(1)$ and $\mathbb{E}(\varepsilon_{ij})=\mathbb{V}(\varepsilon_{ij})=1.$
\end{description}
Then the least squares estimator of $\bm{\beta}$ is given by
\begin{equation}
\hat{\bm{\beta}}^{LS}=(\mathbf{X}^{T}\mathbf{X})^{-1}\big[\ln (\mathbf{Y})\big]^{T}\mathbf{X}.
\end{equation}
\end{proposition}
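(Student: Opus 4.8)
The plan is to follow the standard least-squares route already prepared in the text: take the objective $S(\bm{\beta})$ expanded just above the statement, differentiate it with respect to $\bm{\beta}$, solve the resulting normal equations, and verify that the stationary point is a minimum.

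First I would start from the quadratic form
\begin{equation*}
S(\bm{\beta}) = \big[\ln (\mathbf{Y})\big]^{T}\big[\ln (\mathbf{Y})\big] - 2\big[\ln (\mathbf{Y})\big]^{T}\mathbf{X}\bm{\beta} + \bm{\beta}^{T}(\mathbf{X}^{T}\mathbf{X})\bm{\beta},
\end{equation*}
and compute its gradient via the matrix-calculus identities $\partial(\mathbf{a}^{T}\bm{\beta})/\partial\bm{\beta}=\mathbf{a}$ and $\partial(\bm{\beta}^{T}\mathbf{A}\bm{\beta})/\partial\bm{\beta}=(\mathbf{A}+\mathbf{A}^{T})\bm{\beta}=2\mathbf{A}\bm{\beta}$ applied to the symmetric matrix $\mathbf{A}=\mathbf{X}^{T}\mathbf{X}$. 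This yields $\partial S/\partial\bm{\beta} = -2\,\mathbf{X}^{T}\ln (\mathbf{Y}) + 2\,(\mathbf{X}^{T}\mathbf{X})\bm{\beta}$.

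Setting the gradient to zero gives the normal equations $(\mathbf{X}^{T}\mathbf{X})\bm{\beta} = \mathbf{X}^{T}\ln (\mathbf{Y})$. Assuming $\mathbf{X}$ has full column rank $p=2k-1$, so that $\mathbf{X}^{T}\mathbf{X}$ is invertible, I invert to obtain $\hat{\bm{\beta}}^{LS} = (\mathbf{X}^{T}\mathbf{X})^{-1}\mathbf{X}^{T}\ln (\mathbf{Y})$, which is the claimed expression once one reads $\mathbf{X}^{T}\ln (\mathbf{Y}) = \big([\ln (\mathbf{Y})]^{T}\mathbf{X}\big)^{T}$, i.e. up to a transpose in the notation of the statement. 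To finish I would observe that the Hessian of $S$ is $2\,\mathbf{X}^{T}\mathbf{X}$, positive definite under the same rank hypothesis, so $S$ is strictly convex and $\hat{\bm{\beta}}^{LS}$ is its unique global minimiser.

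The only point requiring any care — more a remark than an obstacle — is the invertibility of $\mathbf{X}^{T}\mathbf{X}$: one must note that the incidence-type design matrix of the run-off triangle has full column rank $p=2k-1$, equivalently that the parameters $\tau,\alpha_{i},\gamma_{j}$ are identifiable once the usual corner constraints $\alpha_{1}=\gamma_{1}=0$ are imposed, exactly as in the displayed $3\times3$ example. Everything else is routine differentiation of a quadratic form.
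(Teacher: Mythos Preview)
Your proof is correct and follows essentially the same route as the paper's own argument: expand $S(\bm{\beta})$, differentiate, solve the normal equations, and check that the Hessian $2\,\mathbf{X}^{T}\mathbf{X}$ is positive (semi-)definite. Your extra remarks on the transpose convention and on the full-rank/identifiability hypothesis needed for invertibility of $\mathbf{X}^{T}\mathbf{X}$ are in fact more careful than the paper, which simply writes the formula and does not discuss the rank condition.
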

\begin{proof}
\begin{equation}
\min_{\bm{\beta}} S(\bm{\beta}) \Leftrightarrow \left\{ \begin{array}{cl}
\dfrac{\partial S(\bm{\beta})}{\partial \bm{\beta}} = 0 \\
\dfrac{\partial ^{2} S(\bm{\beta})}{\partial \bm{\beta}^2} > 0
\end{array}\right.
\end{equation}
We have 
\begin{equation}
S(\bm{\beta})=\big[\ln (\mathbf{Y})\big]^{T} \big[\ln (\mathbf{Y})\big] - 2\big[\ln (\mathbf{Y})\big]^{T}\mathbf{X}\bm{\beta} + \bm{\beta}^{T}(\mathbf{X}^{T} \mathbf{X})\bm{\beta}
\end{equation}
\begin{equation}
\Rightarrow \dfrac{\partial S(\bm{\beta})}{\partial \bm{\beta}} = -2 \big[\ln (\mathbf{Y})\big]^{T}\mathbf{X} + 2(\mathbf{X}^{T}\mathbf{X})\bm{\beta}.
\end{equation}
\begin{align}
\dfrac{\partial S(\bm{\beta})}{\partial \bm{\beta}} = 0 & \Leftrightarrow (\mathbf{X}^{T}\mathbf{X})\bm{\beta} = \big[\ln (\mathbf{Y})\big]^{T}\mathbf{X}\\
& \bm{\beta} = (\mathbf{X}^{T}\mathbf{X})^{-1}\big[\ln (\mathbf{Y})\big]^{T}\mathbf{X}.
\end{align}
But 
\begin{equation}
\dfrac{\partial ^{2} S(\bm{\beta})}{\partial \bm{\beta}^2} = 2(\mathbf{X}^{T}\mathbf{X}),
\end{equation}
which is a semi definite positive matrix. Hence 
\begin{equation}
\hat{\bm{\beta}}^{LS}=(\mathbf{X}^{T}\mathbf{X})^{-1}\big[\ln (\mathbf{Y})\big]^{T}\mathbf{X}.
\end{equation}
\end{proof} 
\section{A Fuzzy Least Squares Estimation of a Hybrid Log-Poisson Regression for Loss Reserving}
In this section, we present another way to estimate the parameters of the hybrid log-Poisson regression, which is the extension of the classical log-Poisson regression (Mack, 1991) in loss reserving framework.\par
(Mack, 1991) assumes that the incremental payments $Y_{ij}$ are log-Poisson distributed, i.e,
\begin{equation}
Y_{ij} \sim \mathcal{P}(e^{\nu_{ij}}) \Rightarrow \mathbb{E}(Y_{ij})=e^{\nu_{ij}}=\varphi_{ij}\; \forall (i,j)\in \{ i=1,\ldots, k \} \times \{j=1,\ldots, k-i+1\}.
\end{equation} 
We assume that uncertainty about in the run-off triangle is due both to fuzziness and randomness. We suppose then that $\tilde{Y}_{ij}=(Y_{ij}^{L}, Y_{ij}^{c},Y_{ij}^{R})$ is a fuzzy Poisson random variable (Buckley, 2006), i.e,
\begin{align*}
\big[\mathbb{E}_{F}(\tilde{Y}_{ij})\big]_{h} & = \{\sum\limits_{x=0}^{+\infty}xe^{-\varphi_{ij}}\dfrac{(\varphi_{ij})^{x}}{x!}\mid \varphi_{ij} \in [\tilde{Y}_{ij}]_{h} \}\\
& =\{  \varphi_{ij}\mid \varphi_{ij} \in [\tilde{Y}_{ij}]_{h} \}\\
& = \tilde{\varphi}_{ij},
\end{align*}
where $\mathbb{E}_{F}(\cdot)$ is the fuzzy expected value operator. So the fuzzy expected value is just the fuzzification of the crisp expected value.\par
The hybrid model built over the log-Poisson regression can be defined in matrix form and by using result of section \ref{s1} as follows:
\begin{equation}\label{6}
\left\{ \begin{array}{cl}
\ln (Y^{c}) & = Y^{c^{\ast}} + \ln (\bm{\varepsilon}), \; Y^{c^{\ast}} = \mathbf{X}\bm{\beta} \\
\ln (Y^{L}) & = Y^{L^{\ast}} + \ln (\bm{\xi}), \; Y^{L^{\ast}} = Y^{c^{\ast}}\theta + \mathbf{1}\lambda\\
\ln (Y^{R}) & = Y^{R^{\ast}} + \ln (\bm{\eta}), \; Y^{R^{\ast}} = Y^{c^{\ast}}\delta + \mathbf{1}\mu
\end{array}\right.
\Leftrightarrow 
\left\{ \begin{array}{cl}
Y^{c^{\prime}} & = Y^{c^{\ast}} + \bm{\varepsilon}^{\prime}, \; Y^{c^{\ast}} = \mathbf{X}\bm{\beta} \\
Y^{L^{\prime}} & = Y^{L^{\ast}} + \bm{\xi}^{\prime}, \; Y^{L^{\ast}} = Y^{c^{\ast}}\theta + \mathbf{1}\lambda\\
Y^{R^{\prime}} & = Y^{R^{\ast}} + \bm{\eta}^{\prime}, \; Y^{R^{\ast}} = Y^{c^{\ast}}\delta + \mathbf{1}\mu
\end{array}\right.
\end{equation}
where
\begin{enumerate}
	\item[$\bullet$]
	$\bm{\beta} = \big(\tau, \bm{\alpha}, \bm{\gamma}\big)^{T} \in \mathbb{R}^{2k-1}$ with
	\begin{align*}
	            & \tau \in \mathbb{R}\\
	\bm{\alpha} & = (\alpha_{2} \; \ldots \; \alpha_{k}) \in \mathbb{R}^{k-1}\\
	\bm{\gamma} & = (\gamma_{2} \; \ldots \; \gamma_{k}) \in \mathbb{R}^{k-1}
	\end{align*}
	\item[$\bullet$]
	\begin{align*}
	Y^{c^{\prime}} & = \ln (Y^{c}); & Y^{L^{\prime}} & = \ln (Y^{L}); & Y^{R^{\prime}} & = \ln (Y^{R})\\
	\bm{\varepsilon}^{\prime} & = \ln(\bm{\varepsilon}); & \bm{\xi}^{\prime} & = \ln(\bm{\xi}); & \bm{\eta}^{\prime} & = \ln(\bm{\eta}).
	\end{align*}
	\item[$\bullet$]
$\bm{\varepsilon}, \bm{\xi}, \bm{\eta}$ are $n\times 1$ vectors of uncorrelated error terms following Poisson random variables ($\mathcal{P}(1)$) such that $\mathbb{E}(\bm{\varepsilon}^{\prime})=\mathbb{E}(\bm{\xi}^{\prime})=\mathbb{E}(\bm{\eta}^{\prime})=0_{n\times 1}$.
\end{enumerate}
\begin{theorem}\label{pro1}
	The iterative fuzzy least squares estimators $\hat{\bm{\beta}}, \hat{\theta}, \hat{\delta}, \hat{\lambda}$ and $\hat{\mu}$ of the parameters $\bm{\beta}, \theta, \delta, \lambda$ and $\mu$ in model \eqref{6} are given by :
	\begin{align}
	\hat{\bm{\beta}} & = \dfrac{1}{\big(1 + \theta^{2} + \delta^{2}\big)} (\mathbf{X}^{T} \mathbf{X})^{-1} \bigg\{\mathbf{X}^{T}Y^{c^{\prime}} + \theta\mathbf{X}^{T}\big(Y^{L^{\prime}} - \mathbf{1}\lambda \big) +  \delta\mathbf{X}^{T}\big(Y^{R^{\prime}} - \mathbf{1}\mu \big)\bigg\}\\
	\hat{\theta} & = \big[\bm{\beta}^{T}(\mathbf{X}^{T}\mathbf{X})\bm{\beta}\big]^{-1}\bm{\beta}^{T}\mathbf{X}^{T}\big(Y^{L^{\prime}} - \mathbf{1}\lambda \big)\\
	\hat{\delta} & = \big[\bm{\beta}^{T}(\mathbf{X}^{T}\mathbf{X})\bm{\beta}\big]^{-1}\bm{\beta}^{T}\mathbf{X}^{T}\big(Y^{R^{\prime}} - \mathbf{1}\mu \big)\\
	\hat{\lambda} & = \dfrac{1}{n}\bigg( \mathbf{1}^{T}Y^{L^{\prime}} - \bm{\beta}^{T}\mathbf{X}^{T}\mathbf{1}\theta\bigg)\\
	\hat{\mu} & = \dfrac{1}{n}\bigg( \mathbf{1}^{T}Y^{R^{\prime}} - \bm{\beta}^{T}\mathbf{X}^{T}\mathbf{1}\delta\bigg)
	\end{align}
	where $\bm{\beta}, \theta, \delta, \lambda$ and $\mu$ are the different values taken by the parameters before reaching their optimal values $\hat{\bm{\beta}}, \hat{\theta}, \hat{\delta}, \hat{\lambda}$ and $\hat{\mu}$.
\end{theorem}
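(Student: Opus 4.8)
\medskip

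The plan is to carry over the ordinary least squares argument used for the previous proposition to the full hybrid system \eqref{6}, which, once the logarithmic transformation has been applied, is \emph{linear} in the transformed data $Y^{c^{\prime}},Y^{L^{\prime}},Y^{R^{\prime}}$. First I would write the total residual sum of squares as the sum of the three block contributions $(\bm{\varepsilon}^{\prime})^{T}\bm{\varepsilon}^{\prime}+(\bm{\xi}^{\prime})^{T}\bm{\xi}^{\prime}+(\bm{\eta}^{\prime})^{T}\bm{\eta}^{\prime}$, i.e.
\begin{align*}
S(\bm{\beta},\theta,\delta,\lambda,\mu)
&=\big(Y^{c^{\prime}}-\mathbf{X}\bm{\beta}\big)^{T}\big(Y^{c^{\prime}}-\mathbf{X}\bm{\beta}\big)\\
&\quad+\big(Y^{L^{\prime}}-\mathbf{X}\bm{\beta}\theta-\mathbf{1}\lambda\big)^{T}\big(Y^{L^{\prime}}-\mathbf{X}\bm{\beta}\theta-\mathbf{1}\lambda\big)\\
&\quad+\big(Y^{R^{\prime}}-\mathbf{X}\bm{\beta}\delta-\mathbf{1}\mu\big)^{T}\big(Y^{R^{\prime}}-\mathbf{X}\bm{\beta}\delta-\mathbf{1}\mu\big),
\end{align*}
using $Y^{c^{\ast}}=\mathbf{X}\bm{\beta}$, $Y^{L^{\ast}}=\mathbf{X}\bm{\beta}\theta+\mathbf{1}\lambda$ and $Y^{R^{\ast}}=\mathbf{X}\bm{\beta}\delta+\mathbf{1}\mu$ from \eqref{6}. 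Expanding each quadratic form exactly as in the proof of the preceding proposition produces a polynomial in the five unknowns whose only cross terms couple $\bm{\beta}$ with $\theta$ and with $\delta$.

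Next I would impose the five stationarity conditions $\partial S/\partial\bm{\beta}=0$, $\partial S/\partial\theta=0$, $\partial S/\partial\delta=0$, $\partial S/\partial\lambda=0$ and $\partial S/\partial\mu=0$. Differentiation in $\bm{\beta}$ collects contributions from all three blocks and yields $(1+\theta^{2}+\delta^{2})(\mathbf{X}^{T}\mathbf{X})\bm{\beta}=\mathbf{X}^{T}Y^{c^{\prime}}+\theta\mathbf{X}^{T}(Y^{L^{\prime}}-\mathbf{1}\lambda)+\delta\mathbf{X}^{T}(Y^{R^{\prime}}-\mathbf{1}\mu)$; since $\mathbf{X}$ has full column rank, left-multiplying by $(\mathbf{X}^{T}\mathbf{X})^{-1}$ gives the stated $\hat{\bm{\beta}}$. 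Differentiation in $\theta$ involves only the left-spread block, giving the scalar equation $\bm{\beta}^{T}\mathbf{X}^{T}(Y^{L^{\prime}}-\mathbf{X}\bm{\beta}\theta-\mathbf{1}\lambda)=0$, hence $\hat{\theta}=[\bm{\beta}^{T}(\mathbf{X}^{T}\mathbf{X})\bm{\beta}]^{-1}\bm{\beta}^{T}\mathbf{X}^{T}(Y^{L^{\prime}}-\mathbf{1}\lambda)$, and symmetrically $\hat{\delta}$ from the right-spread block. Differentiation in $\lambda$ gives $\mathbf{1}^{T}(Y^{L^{\prime}}-\mathbf{X}\bm{\beta}\theta-\mathbf{1}\lambda)=0$, and using $\mathbf{1}^{T}\mathbf{1}=n$ together with the fact that $\mathbf{1}^{T}\mathbf{X}\bm{\beta}$ is a scalar equal to $\bm{\beta}^{T}\mathbf{X}^{T}\mathbf{1}$ one obtains $\hat{\lambda}=\tfrac1n(\mathbf{1}^{T}Y^{L^{\prime}}-\bm{\beta}^{T}\mathbf{X}^{T}\mathbf{1}\,\theta)$, and likewise $\hat{\mu}$.

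To finish I would verify the second-order condition in the same spirit as before: for fixed $\theta,\delta$ the map $(\bm{\beta},\lambda,\mu)\mapsto S$ is a convex quadratic whose Hessian is block-diagonal with blocks proportional to $\mathbf{X}^{T}\mathbf{X}$ and to $\mathbf{1}^{T}\mathbf{1}=n$, hence positive semidefinite (mirroring the $2(\mathbf{X}^{T}\mathbf{X})$ computation of the previous proof), and with the remaining variables fixed $S$ is a convex quadratic in $\theta$ and in $\delta$ separately; so each of the five displayed formulas is a genuine minimiser in its own variable given the others. The real point to make explicit — which the closing sentence of the statement already anticipates — is that these five normal equations are \emph{mutually dependent}: $\hat{\bm{\beta}}$ depends on $\theta,\delta,\lambda,\mu$ while each of the latter depends on $\bm{\beta}$, so the system admits no one-shot simultaneous closed form and the estimators are necessarily defined by the coordinate-wise (fixed-point) iteration obtained by substituting the current parameter values into the right-hand sides until convergence. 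The main obstacle is thus not any individual computation but formulating precisely in what sense the displayed expressions constitute the solution, i.e.\ as the updates of the stationarity system rather than an explicit simultaneous solution; I would state this framing explicitly rather than attempt a direct elimination of the coupling.
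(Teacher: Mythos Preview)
Your proposal is correct and follows essentially the same approach as the paper: form the combined residual sum of squares, set the five first-order partials to zero, and solve each normal equation for its own parameter to obtain the displayed iterative updates. The only minor difference is in the second-order check: the paper computes the full Hessian and asserts positive semidefiniteness via a quadratic-form calculation, whereas you argue coordinate-wise convexity, which is in fact the more natural justification for a coordinate-descent/fixed-point scheme and aligns better with the ``iterative'' framing of the statement.
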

\begin{proof}
By using the fuzzy least squares method on model \eqref{6}, we estimate its parameters as follows:\par
Denote
\begin{align*}
\mathcal{S}(\bm{\beta}, \theta, \delta, \lambda, \mu)   & =\big[\bm{\varepsilon}^{\prime}\big]^{T}\big[\bm{\varepsilon}^{\prime}\big] + \big[\bm{\xi}^{\prime}\big]^{T}\big[\bm{\xi}^{\prime}\big] + \big[\bm{\eta}^{\prime}\big]^{T}\big[\bm{\eta}^{\prime}\big]\\
\mathcal{S}(\bm{\beta}, \theta, \delta, \lambda, \mu) & = \bigg(Y^{c^{\prime}} - \mathbf{X}\bm{\beta}\bigg)^{T}\bigg(Y^{c^{\prime}} - \mathbf{X}\bm{\beta}\bigg) + \bigg( Y^{L^{\prime}} - \mathbf{X}\bm{\beta}\theta - \mathbf{1}\lambda\bigg)^{T}\bigg( Y^{L^{\prime}} - \mathbf{X}\bm{\beta}\theta - \mathbf{1}\lambda\bigg)\\
& + \bigg( Y^{R^{\prime}} - \mathbf{X}\bm{\beta}\delta - \mathbf{1}\mu\bigg)^{T}\bigg( Y^{R^{\prime}} - \mathbf{X}\bm{\beta}\delta - \mathbf{1}\mu\bigg)\\
\mathcal{S}(\bm{\beta}, \theta, \delta, \lambda, \mu) & = \bigg( \big[Y^{c^{\prime}}\big]^{T} - \bm{\beta}^{T}\mathbf{X}^{T}\bigg)\bigg(Y^{c^{\prime}} - \mathbf{X}\bm{\beta}\bigg) + \bigg(\big[Y^{L^{\prime}}\big]^{T} - \bm{\beta}^{T}\mathbf{X}^{T}\theta - \mathbf{1}^{T}\lambda\bigg)\bigg( Y^{L^{\prime}} - \mathbf{X}\bm{\beta}\theta - \mathbf{1}\lambda\bigg) \\
& + \bigg(\big[Y^{R^{\prime}}\big]^{T}- \bm{\beta}^{T}\mathbf{X}^{T}\delta - \mathbf{1}^{T}\mu\bigg)\bigg( Y^{R^{\prime}} - \mathbf{X}\bm{\beta}\delta - \mathbf{1}\mu\bigg)\\
\mathcal{S}(\bm{\beta}, \theta, \delta, \lambda, \mu) & = \big[Y^{c^{\prime}}\big]^{T}\big[Y^{c^{\prime}}\big] -2\big[Y^{c^{\prime}}\big]^{T}\mathbf{X}\bm{\beta} + \bm{\beta}^{T}(\mathbf{X}^{T}\mathbf{X})\bm{\beta} + \big[Y^{L^{\prime}}\big]^{T}\big[Y^{L^{\prime}}\big] - 2\big[Y^{L^{\prime}}\big]^{T}\mathbf{X}\bm{\beta}\theta \\
& - 2\big[Y^{L^{\prime}}\big]^{T}\mathbf{1}\lambda + \bm{\beta}^{T}(\mathbf{X}^{T}\mathbf{X})\bm{\beta}\theta^{2} + 2\mathbf{1}^{T}\mathbf{X}\bm{\beta}\theta\lambda + n\lambda^{2} + \big[Y^{R^{\prime}}\big]^{T}\big[Y^{R^{\prime}}\big] - 2\big[Y^{R^{\prime}}\big]^{T}\mathbf{X}\bm{\beta}\delta \\
& - 2\big[Y^{R^{\prime}}\big]^{T}\mathbf{1}\mu + \bm{\beta}^{T}(\mathbf{X}^{T}\mathbf{X})\bm{\beta}\delta^{2} + 2\mathbf{1}^{T}\mathbf{X}\bm{\beta}\delta\mu + n\mu^{2}
\end{align*}
\begin{align*}
\dfrac{\partial \mathcal{S}(\bm{\beta}, \theta, \delta, \lambda, \mu) }{\partial \bm{\beta}} & = - 2\big[Y^{c^{\prime}}\big]^{T}\mathbf{X} + 2(\mathbf{X}^{T}\mathbf{X})\bm{\beta} - 2\big[Y^{L^{\prime}}\big]^{T}\mathbf{X}\theta + 2(\mathbf{X}^{T}\mathbf{X})\bm{\beta}\theta^{2} + 2\mathbf{1}^{T}\mathbf{X}\theta\lambda - 2\big[Y^{R^{\prime}}\big]^{T}\mathbf{X}\delta \\
& + 2(\mathbf{X}^{T}\mathbf{X})\bm{\beta}\delta^{2} + 2\mathbf{1}^{T}\mathbf{X}\delta \mu\\
\dfrac{\partial \mathcal{S}(\bm{\beta}, \theta, \delta, \lambda, \mu) }{\partial \theta} & = -2\big[Y^{L^{\prime}}\big]^{T}\mathbf{X}\bm{\beta} + 2\bm{\beta}^{T}(\mathbf{X}^{T}\mathbf{X})\bm{\beta}\theta + 2\mathbf{1}^{T}\mathbf{X}\bm{\beta}\lambda \\
\dfrac{\partial \mathcal{S}(\bm{\beta}, \theta, \delta, \lambda, \mu) }{\partial \delta} & = -2\big[Y^{R^{\prime}}\big]^{T}\mathbf{X}\bm{\beta} + 2\bm{\beta}^{T}(\mathbf{X}^{T}\mathbf{X})\bm{\beta}\delta + 2\mathbf{1}^{T}\mathbf{X}\bm{\beta}\mu \\
\dfrac{\partial \mathcal{S}(\bm{\beta}, \theta, \delta, \lambda, \mu) }{\partial \lambda} & = -2\big[Y^{L^{\prime}}\big]^{T}\mathbf{1} + 2\mathbf{1}^{T}\mathbf{X}\bm{\beta}\theta + 2n\lambda\\
\dfrac{\partial \mathcal{S}(\bm{\beta}, \theta, \delta, \lambda, \mu) }{\partial \mu} & = -2\big[Y^{R^{\prime}}\big]^{T}\mathbf{1} + 2\mathbf{1}^{T}\mathbf{X}\bm{\beta}\delta + 2n\mu.
\end{align*}
\begin{align}\label{14}
\dfrac{\partial \mathcal{S}(\bm{\beta}, \theta, \delta, \lambda, \mu) }{\partial \bm{\beta}}  = 0 & \Leftrightarrow  -\big[Y^{c^{\prime}}\big]^{T}\mathbf{X} + (\mathbf{X}^{T}\mathbf{X})\bm{\beta}(1 + \theta^{2} + \delta^{2}) - \big[Y^{L^{\prime}}\big]^{T}\mathbf{X}\theta- \big[Y^{R^{\prime}}\big]^{T}\mathbf{X}\delta + \mathbf{1}^{T}\mathbf{X}\theta\lambda + \mathbf{1}^{T}\mathbf{X}\delta\mu = 0 \\
& \Leftrightarrow \bm{\beta} = \dfrac{1}{\big(1 + \theta^{2} + \delta^{2}\big)} (\mathbf{X}^{T} \mathbf{X})^{-1} \bigg\{\big[Y^{c^{\prime}}\big]^{T}\mathbf{X} + \big(\big[Y^{L^{\prime}}\big]^{T} - \mathbf{1}^{T}\lambda \big)\mathbf{X}\theta +  \big(\big[Y^{R^{\prime}}\big]^{T} - \mathbf{1}^{T}\mu \big)\mathbf{X}\delta\bigg\}\\
\dfrac{\partial \mathcal{S}(\bm{\beta}, \theta, \delta, \lambda, \mu) }{\partial \theta} = 0 & \Leftrightarrow -\big[Y^{L^{\prime}}\big]^{T}\mathbf{X}\bm{\beta} + \bm{\beta}^{T}(\mathbf{X}^{T}\mathbf{X})\bm{\beta}\theta + \mathbf{1}^{T}\mathbf{X}\bm{\beta}\lambda = 0\\
& \Leftrightarrow \theta = \big[\bm{\beta}^{T}(\mathbf{X}^{T}\mathbf{X})\bm{\beta}\big]^{-1}\big(\big[Y^{L^{\prime}}\big]^{T} - \mathbf{1}^{T}\lambda \big)\mathbf{X}\bm{\beta}\\
\dfrac{\partial \mathcal{S}(\bm{\beta}, \theta, \delta, \lambda, \mu) }{\partial \delta} = 0 & \Leftrightarrow -\big[Y^{R^{\prime}}\big]^{T}\mathbf{X}\bm{\beta} + \bm{\beta}^{T}(\mathbf{X}^{T}\mathbf{X})\bm{\beta}\delta + \mathbf{1}^{T}\mathbf{X}\bm{\beta}\mu = 0\\
& \Leftrightarrow \delta = \big[\bm{\beta}^{T}(\mathbf{X}^{T}\mathbf{X})\bm{\beta}\big]^{-1}\big(\big[Y^{R^{\prime}}\big]^{T} - \mathbf{1}^{T}\mu \big)\mathbf{X}\bm{\beta}\\
\dfrac{\partial \mathcal{S}(\bm{\beta}, \theta, \delta, \lambda, \mu) }{\partial \lambda}  = 0 & \Leftrightarrow -\big[Y^{L^{\prime}}\big]^{T}\mathbf{1} + \mathbf{1}^{T}\mathbf{X}\bm{\beta}\theta + n\lambda = 0\\
& \Leftrightarrow \lambda = \dfrac{1}{n}\bigg( \big[Y^{L^{\prime}}\big]^{T} \mathbf{1} - \mathbf{1}^{T}\mathbf{X}\bm{\beta}\theta\bigg)\\
\dfrac{\partial \mathcal{S}(\bm{\beta}, \theta, \delta, \lambda, \mu) }{\partial \mu}  = 0 & \Leftrightarrow -\big[Y^{R^{\prime}}\big]^{T}\mathbf{1} + \mathbf{1}^{T}\mathbf{X}\bm{\beta}\delta + n\mu = 0\\
& \Leftrightarrow \mu = \dfrac{1}{n}\bigg( \big[Y^{R^{\prime}}\big]^{T} \mathbf{1} - \mathbf{1}^{T}\mathbf{X}\bm{\beta}\delta\bigg)
\end{align}
Furthermore
\begin{align}
\dfrac{\partial^{2} \mathcal{S}(\bm{\beta}, \theta, \delta, \lambda, \mu) }{\partial \bm{\beta}^2} & = 2(\mathbf{X}^{T}\mathbf{X})(1 + \theta^{2} + \delta^{2})\\
\dfrac{\partial^{2} \mathcal{S}(\bm{\beta}, \theta, \delta, \lambda, \mu) }{\partial \theta\partial \bm{\beta}} & = -2\big[Y^{L^{\prime}}\big]^{T}\mathbf{X} + 4(\mathbf{X}^{T}\mathbf{X})\bm{\beta}\theta + 2\mathbf{1}^{T}\mathbf{X}\lambda\\
\dfrac{\partial^{2} \mathcal{S}(\bm{\beta}, \theta, \delta, \lambda, \mu) }{\partial \delta\partial \bm{\beta}} & = -2\big[Y^{R^{\prime}}\big]^{T}\mathbf{X} + 4(\mathbf{X}^{T}\mathbf{X})\bm{\beta}\delta + 2\mathbf{1}^{T}\mathbf{X}\mu\\
\dfrac{\partial^{2} \mathcal{S}(\bm{\beta}, \theta, \delta, \lambda, \mu) }{\partial \lambda\partial \bm{\beta}} & =2\mathbf{1}^{T}\mathbf{X}\theta, \quad \dfrac{\partial^{2} \mathcal{S}(\bm{\beta}, \theta, \delta, \lambda, \mu) }{\partial \mu\partial \bm{\beta}} =2\mathbf{1}^{T}\mathbf{X}\delta\\
\dfrac{\partial^{2} \mathcal{S}(\bm{\beta}, \theta, \delta, \lambda, \mu) }{\partial \bm{\beta}\partial \theta} & = -2\big[Y^{L^{\prime}}\big]^{T}\mathbf{X} + 4(\mathbf{X}^{T}\mathbf{X})\bm{\beta}\theta + 2\mathbf{1}^{T}\mathbf{X}\lambda\\
\dfrac{\partial^{2} \mathcal{S}(\bm{\beta}, \theta, \delta, \lambda, \mu) }{\partial \theta^2} & =2\bm{\beta}^{T}(\mathbf{X}^{T}\mathbf{X})\bm{\beta}, \quad \dfrac{\partial^{2} \mathcal{S}(\bm{\beta}, \theta, \delta, \lambda, \mu) }{\partial \delta\partial \theta} = 0\\
\dfrac{\partial^{2} \mathcal{S}(\bm{\beta}, \theta, \delta, \lambda, \mu) }{\partial \lambda\partial \theta} & = 2\mathbf{1}^{T}\mathbf{X}\bm{\beta}, \quad \dfrac{\partial^{2} \mathcal{S}(\bm{\beta}, \theta, \delta, \lambda, \mu) }{\partial \mu\partial \theta}  = 0\\
\dfrac{\partial^{2} \mathcal{S}(\bm{\beta}, \theta, \delta, \lambda, \mu) }{\partial \bm{\beta}\partial \delta} & = -2\big[Y^{R^{\prime}}\big]^{T}\mathbf{X} + 4(\mathbf{X}^{T}\mathbf{X})\bm{\beta}\delta + 2\mathbf{1}^{T}\mathbf{X}\mu\\
\dfrac{\partial^{2} \mathcal{S}(\bm{\beta}, \theta, \delta, \lambda, \mu) }{\partial \theta \partial \delta} & = 0, \quad \dfrac{\partial^{2} \mathcal{S}(\bm{\beta}, \theta, \delta, \lambda, \mu) }{\partial \delta^2} =2\bm{\beta}^{T}(\mathbf{X}^{T}\mathbf{X})\bm{\beta}\\
\dfrac{\partial^{2} \mathcal{S}(\bm{\beta}, \theta, \delta, \lambda, \mu) }{\partial \lambda\partial \delta} & = 0, \quad \dfrac{\partial^{2} \mathcal{S}(\bm{\beta}, \theta, \delta, \lambda, \mu) }{\partial \mu\partial \delta} = 2\mathbf{1}^{T}\mathbf{X}\bm{\beta}\\
\dfrac{\partial^{2} \mathcal{S}(\bm{\beta}, \theta, \delta, \lambda, \mu) }{\partial \bm{\beta}\partial \lambda} & = 2\mathbf{1}^{T}\mathbf{X}\theta, \quad \dfrac{\partial^{2} \mathcal{S}(\bm{\beta}, \theta, \delta, \lambda, \mu) }{\partial \theta\partial \lambda} = 2\mathbf{1}^{T}\mathbf{X}\bm{\beta}\\
\dfrac{\partial^{2} \mathcal{S}(\bm{\beta}, \theta, \delta, \lambda, \mu) }{\partial \delta\partial \lambda} & = 0, \quad\dfrac{\partial^{2} \mathcal{S}(\bm{\beta}, \theta, \delta, \lambda, \mu) }{\partial \lambda^2} = 2n\\
\dfrac{\partial^{2} \mathcal{S}(\bm{\beta}, \theta, \delta, \lambda, \mu) }{\partial \mu\partial \lambda} & = 0, \quad\dfrac{\partial^{2} \mathcal{S}(\bm{\beta}, \theta, \delta, \lambda, \mu) }{\partial \bm{\beta}\partial \mu} = 2\mathbf{1}^{T}\mathbf{X}\delta\\
\dfrac{\partial^{2} \mathcal{S}(\bm{\beta}, \theta, \delta, \lambda, \mu) }{\partial \theta\partial \mu} & = 0, \quad\dfrac{\partial^{2} \mathcal{S}(\bm{\beta}, \theta, \delta, \lambda, \mu) }{\partial \delta\partial \mu} = 2\mathbf{1}^{T}\mathbf{X}\bm{\beta}\\
\dfrac{\partial^{2} \mathcal{S}(\bm{\beta}, \theta, \delta, \lambda, \mu) }{\partial \lambda\partial \mu} & = 0, \quad\dfrac{\partial^{2} \mathcal{S}(\bm{\beta}, \theta, \delta, \lambda, \mu) }{\partial \mu^2} = 2n.
\end{align}
Let $H$ be the corresponding Hessian matrix. Then 
\begin{equation}
\forall \mathbf{u}=(u_{1}\; \ldots \; u_{5})^{T}\in \mathbb{R}^{5}, \quad \mathbf{u}^{T}H\mathbf{u}= 2(\mathbf{X}^{T}\mathbf{X})(1 + \theta^{2})u_{1}^{2} + 2\bm{\beta}^{T}(\mathbf{X}^{T}\mathbf{X})\bm{\beta}(u_{1}^{2} + u_{3}^{2}) + 2n(u_{4}^{2} + u_{5}^{2})> 0
\end{equation}
Hence $H$ is a semi positive definite matrix and this ends the proof.
\end{proof}
\section{A Goodness of Fit Index for Hybrid log-Poisson Regression}
In this section, we derive a goodness of fit index $\tilde{R}^{2}_{F}$ for the hybrid log-Poisson regression. This index is relevant to assess the explanatory power of the model.\par
In order to provide the mathematical formula of that $\tilde{R}^{2}_{F}$, let us prove some results.
\begin{proposition}\label{prop}
Let 
\begin{equation}
\left\{ \begin{array}{cl}
Y^{c^{\prime}} & = Y^{c^{\ast}} + \bm{\varepsilon}^{\prime}, \; Y^{c^{\ast}} = \mathbf{X}\bm{\beta} \\
Y^{L^{\prime}} & = Y^{L^{\ast}} + \bm{\xi}^{\prime}, \; Y^{L^{\ast}} = Y^{c^{\ast}}\theta + \mathbf{1}\lambda\\
Y^{R^{\prime}} & = Y^{R^{\ast}} + \bm{\eta}^{\prime}, \; Y^{R^{\ast}} = Y^{c^{\ast}}\delta + \mathbf{1}\mu
\end{array}\right.
\end{equation}
be the hybrid model built over the classical log-Poisson regression,
where
\begin{enumerate}
	\item[$\bullet$]
	$\bm{\beta} = \big(\tau, \bm{\alpha}, \bm{\gamma}\big)^{T} \in \mathbb{R}^{2k-1}$ with
	\begin{align*}
	& \tau \in \mathbb{R}\\
	\bm{\alpha} & = (\alpha_{2} \; \ldots \; \alpha_{k}) \in \mathbb{R}^{k-1}\\
	\bm{\gamma} & = (\gamma_{2} \; \ldots \; \gamma_{k}) \in \mathbb{R}^{k-1}
	\end{align*}
	\item[$\bullet$]
	$\bm{\varepsilon}, \bm{\xi}, \bm{\eta}$ are $n\times 1$ vectors of error terms following Poisson random variables ($\mathcal{P}(1)$).
\end{enumerate} 
 The following relationships hold :
\begin{enumerate}
\item[\textbf{1)}]$\sum\limits_{i=1}^{k} \sum\limits_{j=1}^{k-i+1}\big(Y_{ij}^{c^{\ast}}\big)^{T}\big(Y_{ij}^{c^{\prime}} - Y_{ij}^{c^{\ast}}\big)=0$
\item[\textbf{2)}]
\begin{itemize}
\item[$\bullet$]$\sum\limits_{i=1}^{k} \sum\limits_{j=1}^{k-i+1}\big( Y_{ij}^{c^{\prime}} - Y_{ij}^{c^{\ast}}\big) = 0$
\item[$\bullet$]$\sum\limits_{i=1}^{k} \sum\limits_{j=1}^{k-i+1}\big( Y_{ij}^{L^{\prime}} - Y_{ij}^{L^{\ast}}\big) = 0$
\item[$\bullet$]$\sum\limits_{i=1}^{k} \sum\limits_{j=1}^{k-i+1}\big( Y_{ij}^{R^{\prime}} - Y_{ij}^{R^{\ast}}\big) = 0$
\end{itemize}
\item[\textbf{3)}]
\begin{itemize}
\item[$\bullet$]$\sum\limits_{i=1}^{k} \sum\limits_{j=1}^{k-i+1}\big( Y_{ij}^{L^{\prime}} - Y_{ij}^{L^{\ast}}\big)^{T}Y_{ij}^{L^{\ast}} = 0$
\item[$\bullet$]$\sum\limits_{i=1}^{k} \sum\limits_{j=1}^{k-i+1}\big( Y_{ij}^{R^{\prime}} - Y_{ij}^{R^{\ast}}\big)^{T}Y_{ij}^{R^{\ast}} = 0$
\end{itemize}
\end{enumerate}
\end{proposition}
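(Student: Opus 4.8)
The plan is to read all five identities directly off the stationarity (normal) equations established inside the proof of Theorem~\ref{pro1}, evaluated at a fixed point of the iterative scheme so that all five hold simultaneously; the Hessian computation there certifies that such a point is the minimiser, so the $\bm{\beta},\theta,\delta,\lambda,\mu$ in the statement are precisely the vectors satisfying those equations. Abbreviate the three residual vectors by $\bm{\varepsilon}^{\prime}=Y^{c^{\prime}}-Y^{c^{\ast}}$, $\bm{\xi}^{\prime}=Y^{L^{\prime}}-Y^{L^{\ast}}$ and $\bm{\eta}^{\prime}=Y^{R^{\prime}}-Y^{R^{\ast}}$. Each double sum $\sum_{i}\sum_{j}$ in the statement is just a Euclidean inner product of two $n\times1$ vectors (the transpose on a scalar entry being cosmetic), so 1) reads $(Y^{c^{\ast}})^{T}\bm{\varepsilon}^{\prime}=\bm{\beta}^{T}\mathbf{X}^{T}\bm{\varepsilon}^{\prime}=0$, the three identities in 2) read $\mathbf{1}^{T}\bm{\varepsilon}^{\prime}=\mathbf{1}^{T}\bm{\xi}^{\prime}=\mathbf{1}^{T}\bm{\eta}^{\prime}=0$, and the two in 3) read $(\bm{\xi}^{\prime})^{T}Y^{L^{\ast}}=(\bm{\eta}^{\prime})^{T}Y^{R^{\ast}}=0$.

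First I would dispatch the cheap parts. The $\lambda$- and $\mu$-equations are exactly $\mathbf{1}^{T}(Y^{L^{\prime}}-\mathbf{X}\bm{\beta}\theta-\mathbf{1}\lambda)=0$ and $\mathbf{1}^{T}(Y^{R^{\prime}}-\mathbf{X}\bm{\beta}\delta-\mathbf{1}\mu)=0$, i.e. the $L$- and $R$-parts of 2). The $\theta$- and $\delta$-equations, after shuffling scalars across their own transposes, become $\bm{\beta}^{T}\mathbf{X}^{T}(Y^{L^{\prime}}-\mathbf{X}\bm{\beta}\theta-\mathbf{1}\lambda)=0$ and $\bm{\beta}^{T}\mathbf{X}^{T}(Y^{R^{\prime}}-\mathbf{X}\bm{\beta}\delta-\mathbf{1}\mu)=0$, that is $(Y^{c^{\ast}})^{T}\bm{\xi}^{\prime}=0$ and $(Y^{c^{\ast}})^{T}\bm{\eta}^{\prime}=0$; combining these two with the two $\mathbf{1}^{T}$-identities just obtained and expanding $Y^{L^{\ast}}=\theta Y^{c^{\ast}}+\mathbf{1}\lambda$, $Y^{R^{\ast}}=\delta Y^{c^{\ast}}+\mathbf{1}\mu$ yields 3) immediately. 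For 1) I would left-multiply the $\bm{\beta}$-equation \eqref{14} by $\bm{\beta}^{T}$: using the $\theta$-equation to replace $\bm{\beta}^{T}\mathbf{X}^{T}(Y^{L^{\prime}}-\mathbf{1}\lambda)$ by $\theta\,\bm{\beta}^{T}\mathbf{X}^{T}\mathbf{X}\bm{\beta}$ and the $\delta$-equation to do the same on the $R$ side, the $\theta^{2}$ and $\delta^{2}$ contributions cancel the extra terms coming from the factor $1+\theta^{2}+\delta^{2}$, leaving $\bm{\beta}^{T}\mathbf{X}^{T}\mathbf{X}\bm{\beta}=\bm{\beta}^{T}\mathbf{X}^{T}Y^{c^{\prime}}$, which is 1).

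The one part not obtainable from a single normal equation is the $c$-identity in 2), since in this coupled model $\bm{\varepsilon}^{\prime}$ is \emph{not} orthogonal to the whole column space of $\mathbf{X}$ but only to the single direction $\bm{\beta}$. The device is to rearrange the $\bm{\beta}$-equation \eqref{14} into $\mathbf{X}^{T}\bm{\varepsilon}^{\prime}=-\theta\,\mathbf{X}^{T}\bm{\xi}^{\prime}-\delta\,\mathbf{X}^{T}\bm{\eta}^{\prime}$ and then exploit the form of the design matrix: $\mathbf{1}$ is literally the first (intercept/$\tau$) column of $\mathbf{X}$, so $\mathbf{1}=\mathbf{X}\mathbf{e}_{1}$ with $\mathbf{e}_{1}=(1,0,\ldots,0)^{T}$, whence $\mathbf{1}^{T}\bm{\varepsilon}^{\prime}=\mathbf{e}_{1}^{T}\mathbf{X}^{T}\bm{\varepsilon}^{\prime}=-\theta\,\mathbf{1}^{T}\bm{\xi}^{\prime}-\delta\,\mathbf{1}^{T}\bm{\eta}^{\prime}=0$ by the $L$- and $R$-parts of 2) already proved.

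I expect this last step to be the only genuine obstacle: one must resist writing $\mathbf{X}^{T}\bm{\varepsilon}^{\prime}=0$ (false here) and instead route $\mathbf{1}^{T}\bm{\varepsilon}^{\prime}$ through the other two residuals together with the structural fact $\mathbf{1}\in\mathrm{col}(\mathbf{X})$. Everything else is bookkeeping with transposes and the five stationarity conditions, and the order proposed above — the two $\mathbf{1}^{T}$-identities and the two auxiliary identities $(Y^{c^{\ast}})^{T}\bm{\xi}^{\prime}=(Y^{c^{\ast}})^{T}\bm{\eta}^{\prime}=0$ first, then 1), then the $c$-identity in 2), then 3) — keeps each individual step to a single line.
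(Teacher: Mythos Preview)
Your proposal is correct and follows essentially the same route as the paper: both arguments read all five identities off the first-order conditions of $\mathcal{S}$, first extracting $(Y^{c^{\ast}})^{T}\bm{\xi}^{\prime}=(Y^{c^{\ast}})^{T}\bm{\eta}^{\prime}=0$ from the $\theta$- and $\delta$-equations and $\mathbf{1}^{T}\bm{\xi}^{\prime}=\mathbf{1}^{T}\bm{\eta}^{\prime}=0$ from the $\lambda$- and $\mu$-equations, then left-multiplying the $\bm{\beta}$-equation by $\bm{\beta}^{T}$ for part~1) and using the intercept component of the $\bm{\beta}$-equation together with the two vanishing spread-residual sums for the $c$-part of~2). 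Your write-up is in fact cleaner on two points the paper leaves implicit or handles loosely: you invoke the $\lambda$- and $\mu$-equations directly for $\mathbf{1}^{T}\bm{\xi}^{\prime}=\mathbf{1}^{T}\bm{\eta}^{\prime}=0$ (the paper presents these as $\Leftrightarrow$ consequences of the $\theta$- and $\delta$-equations, which they are not), and you make explicit the device $\mathbf{1}=\mathbf{X}\mathbf{e}_{1}$ that lets one extract $\mathbf{1}^{T}\bm{\varepsilon}^{\prime}$ from the vector equation $\mathbf{X}^{T}\bm{\varepsilon}^{\prime}=-\theta\mathbf{X}^{T}\bm{\xi}^{\prime}-\delta\mathbf{X}^{T}\bm{\eta}^{\prime}$, whereas the paper arrives at the analogous relation via a stacked-norm formulation whose intermediate bookkeeping drops the $\theta,\delta$ factors (harmlessly, since the terms they multiply vanish anyway).
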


\begin{proof}
\begin{enumerate}
\item[\textbf{1)}]
We have\\
$\mathcal{S}(\bm{\beta}, \theta, \delta, \lambda,\mu)  = \bigg\{\big[Y^{c^{\prime}}\big]^{T}\big[Y^{c^{\prime}}\big] - 2\big[Y^{c^{\prime}}\big]^{T}\mathbf{X}\bm{\beta} + \bm{\beta}^{T}(\mathbf{X}^{T}\mathbf{X})\bm{\beta} \bigg\} 
+ \bigg\{\big[Y^{L^{\prime}}\big]^{T}\big[Y^{L^{\prime}}\big] - 2\big[Y^{L^{\prime}}\big]^{T}\big( \mathbf{X}\bm{\beta}\theta + \mathbf{1}\lambda\big) \bigg\} + 
\bigg\{\big[Y^{R^{\prime}}\big]^{T}\big[Y^{R^{\prime}}\big] - 2\big[Y^{R^{\prime}}\big]^{T}\big( \mathbf{X}\bm{\beta}\delta + \mathbf{1}\mu\big) \bigg\} 
+ \big( \mathbf{X}\bm{\beta}\theta + \mathbf{1}\lambda\big)^{T}\big( \mathbf{X}\bm{\beta}\theta + \mathbf{1}\lambda\big) + \big( \mathbf{X}\bm{\beta}\delta + \mathbf{1}\mu\big)^{T}\big( \mathbf{X}\bm{\beta}\delta + \mathbf{1}\mu\big)$
\begin{align}\label{7}\nonumber
\dfrac{\partial \mathcal{S}(\bm{\beta}, \theta, \delta, \lambda,\mu)}{\partial \bm{\beta}} = 0_{2k-1} & \Leftrightarrow \mathbf{X}^{T}\bigg\{\big(Y^{c^{\prime}} - Y^{c^{\ast}} \big) + \theta \big(Y^{L^{\prime}} - Y^{c^{\ast}}\theta - \mathbf{1}\lambda \big) + \delta \big(Y^{R^{\prime}} - Y^{c^{\ast}}\delta - \mathbf{1}\mu \big)\bigg\}=0_{2k-1}\\ 
& \Leftrightarrow \mathbf{X}^{T}\bigg\{\big(Y^{c^{\prime}} - Y^{c^{\ast}} \big) + \theta \big(Y^{L^{\prime}} - Y^{L^{\ast}} \big) + \delta \big(Y^{R^{\prime}} - Y^{R^{\ast}} \big)\bigg\}=0_{2k-1}
\end{align}
\begin{align}\label{8}\nonumber
\dfrac{\partial \mathcal{S}(\bm{\beta}, \theta, \delta, \lambda,\mu)}{\partial \theta} = 0 & \Leftrightarrow \bm{\beta}^{T}(\mathbf{X}^{T}\mathbf{X})\bm{\beta}\theta - \bm{\beta}^{T}\mathbf{X}^{T}\big[Y^{L^{\prime}}\big] + \bm{\beta}^{T}\mathbf{X}^{T} \mathbf{1}\lambda = 0\\\nonumber
& \Leftrightarrow \bm{\beta}^{T}(\mathbf{X}^{T}\mathbf{X})\bm{\beta}\theta = \bm{\beta}^{T}\mathbf{X}^{T}\big[Y^{L^{\prime}}\big] - \bm{\beta}^{T}\mathbf{X}^{T} \mathbf{1}\lambda \\\nonumber
& \Leftrightarrow \big(Y^{c^{\ast}}\big)^{T}\big(Y^{c^{\ast}}\big)\theta = \big(Y^{c^{\ast}}\big)^{T} \bigg(Y^{L^{\prime}} - \mathbf{1}\lambda - Y^{c^{\ast}}\theta + Y^{c^{\ast}}\theta \bigg)\\\nonumber
& \Leftrightarrow \big(Y^{c^{\ast}}\big)^{T}\big(Y^{c^{\ast}}\big)\theta = \big(Y^{c^{\ast}}\big)^{T} \bigg(Y^{L^{\prime}} - Y^{L^{\ast}} + Y^{c^{\ast}}\theta \bigg)\\\nonumber
& \Leftrightarrow \big(Y^{c^{\ast}}\big)^{T}\big(Y^{c^{\ast}}\big)\theta = \big(Y^{c^{\ast}}\big)^{T} \big(Y^{L^{\prime}} - Y^{L^{\ast}}\big) + \big(Y^{c^{\ast}}\big)^{T}\big(Y^{c^{\ast}}\big)\theta\\
& \Leftrightarrow \big(Y^{c^{\ast}}\big)^{T} \big(Y^{L^{\prime}} - Y^{L^{\ast}}\big)=0 
\end{align}
\begin{align}\label{9}\nonumber
\dfrac{\partial \mathcal{S}(\bm{\beta}, \theta, \delta, \lambda,\mu)}{\partial \delta} = 0 & \Leftrightarrow -\big[Y^{R^{\prime}}\big]^{T}\mathbf{X}\bm{\beta} + \bm{\beta}(\mathbf{X}^{T}\mathbf{X})\bm{\beta}\delta + \mathbf{1}^{T}\mathbf{X}\bm{\beta}\mu = 0 \\\nonumber
& \Leftrightarrow -\bm{\beta}^{T}\mathbf{X}^{T}\big[Y^{R^{\prime}}\big] + \bm{\beta}(\mathbf{X}^{T}\mathbf{X})\bm{\beta}\delta + \bm{\beta}^{T}\mathbf{X}^{T}\mathbf{1}\mu = 0 \\\nonumber
& \Leftrightarrow \bm{\beta}(\mathbf{X}^{T}\mathbf{X})\bm{\beta}\delta = \bm{\beta}^{T}\mathbf{X}^{T}\big[Y^{R^{\prime}}\big] - \bm{\beta}^{T}\mathbf{X}^{T}\mathbf{1}\mu \\\nonumber
& \Leftrightarrow \big(Y^{c^{\ast}}\big)^{T}\big(Y^{c^{\ast}}\big)\delta = \big(Y^{c^{\ast}}\big)^{T} \bigg\{Y^{R^{\prime}} - \mathbf{1}\mu \bigg\}\\\nonumber
&  \Leftrightarrow \big(Y^{c^{\ast}}\big)^{T}\big(Y^{c^{\ast}}\big)\delta = \big(Y^{c^{\ast}}\big)^{T} \bigg\{Y^{R^{\prime}} - Y^{c^{\ast}}\delta - \mathbf{1}\mu + Y^{c^{\ast}}\delta\bigg\}\\\nonumber
&  \Leftrightarrow \big(Y^{c^{\ast}}\big)^{T}\big(Y^{c^{\ast}}\big)\delta = \big(Y^{c^{\ast}}\big)^{T} \bigg\{Y^{R^{\prime}} - Y^{R^{\ast}} + Y^{c^{\ast}}\delta\bigg\}\\\nonumber
&  \Leftrightarrow \big(Y^{c^{\ast}}\big)^{T}\big(Y^{c^{\ast}}\big)\delta = \big(Y^{c^{\ast}}\big)^{T} \bigg\{Y^{R^{\prime}} - Y^{R^{\ast}}\bigg\}  + \big(Y^{c^{\ast}}\big)^{T}\big(Y^{c^{\ast}}\big)\delta\\
& \Leftrightarrow \big(Y^{c^{\ast}}\big)^{T} \bigg\{Y^{R^{\prime}} - Y^{R^{\ast}}\bigg\} = 0
\end{align}
\begin{align}\label{10}\nonumber
\eqref{7} & \Leftrightarrow \bm{\beta}^{T}\mathbf{X}^{T}\bigg\{\big(Y^{c^{\prime}} - Y^{c^{\ast}} \big) + \theta \big(Y^{L^{\prime}} - Y^{L^{\ast}} \big) + \delta \big(Y^{R^{\prime}} - Y^{R^{\ast}} \big)\bigg\}=0\\
& \Leftrightarrow \big(Y^{c^{\ast}}\big)^{T}\bigg\{\big(Y^{c^{\prime}} - Y^{c^{\ast}} \big) + \theta \big(Y^{L^{\prime}} - Y^{L^{\ast}} \big) + \delta \big(Y^{R^{\prime}} - Y^{R^{\ast}} \big)\bigg\}=0
\end{align}
\begin{equation}\label{11}
\eqref{8}-\eqref{9}\; \text{in}\; \eqref{10} \Rightarrow \big(Y^{c^{\ast}}\big)^{T}\big(Y^{c^{\prime}} - Y^{c^{\ast}}\big)=0.\quad \square
\end{equation}
\item[\textbf{2)}]
\begin{align}
\mathcal{S}(\bm{\beta}, \theta, \delta, \lambda,\mu)  & = \begin{Vmatrix}
\big( Y_{ij}^{c^{\prime}} - Y_{ij}^{c^{\ast}}\big)\\
\big( Y_{ij}^{L^{\prime}} - Y_{ij}^{L^{\ast}}\big)\\
\big( Y_{ij}^{R^{\prime}} - Y_{ij}^{R^{\ast}}\big)
\end{Vmatrix}^2\\
& =  \begin{Vmatrix}
\big( Y_{ij}^{c^{\prime}} - \mathbf{X}\bm{\beta}\big)\\
\big( Y_{ij}^{L^{\prime}} - \mathbf{X}\bm{\beta}\theta - \mathbf{1}\lambda\big)\\
\big( Y_{ij}^{R^{\prime}} - \mathbf{X}\bm{\beta}\delta - \mathbf{1}\mu\big)
\end{Vmatrix}^2\\
& = \begin{Vmatrix}
\begin{pmatrix}
Y^{c^{\prime}}\\
Y^{L^{\prime}} - \mathbf{1}\lambda \\
Y^{R^{\prime}} - \mathbf{1}\mu 
\end{pmatrix}
-
\begin{pmatrix}
\mathbf{X}\\
\mathbf{X}\theta \\
\mathbf{X}\delta
\end{pmatrix}
\bm{\beta}
\end{Vmatrix}^2
\end{align}
\begin{align*}
\min_{\bm{\beta}}\mathcal{S}(\bm{\beta}, \theta, \delta, \lambda,\mu) & \Leftrightarrow \mathbf{1}^{T}\bigg( Y^{c^{\prime}} + Y^{L^{\prime}} - \mathbf{1}\lambda + Y^{R^{\prime}} - \mathbf{1}\mu\bigg) = \mathbf{1}^{T}\begin{pmatrix}
\mathbf{X}\\
\mathbf{X}\theta\\
\mathbf{X}\delta
\end{pmatrix}
\bm{\beta}\\
& \Leftrightarrow \mathbf{1}^{T}\big[Y^{c^{\prime}}\big] + \mathbf{1}^{T}\big[Y^{L^{\prime}}\big] - n\lambda + \mathbf{1}^{T}\big[Y^{R^{\prime}}\big] - n\mu = \mathbf{1}^{T}\mathbf{X}\bm{\beta} + \mathbf{1}^{T}\mathbf{X}\bm{\beta}\theta + \mathbf{1}^{T}\mathbf{X}\bm{\beta}\delta \\
& \Leftrightarrow \mathbf{1}^{T}\big[Y^{c^{\prime}}\big] + \mathbf{1}^{T}\big[Y^{L^{\prime}}\big] + \mathbf{1}^{T}\big[Y^{R^{\prime}}\big] - n\lambda - n\mu - \mathbf{1}^{T}\mathbf{X}\bm{\beta} - \mathbf{1}^{T}\mathbf{X}\bm{\beta}\theta - \mathbf{1}^{T}\mathbf{X}\bm{\beta}\delta =0\\
& \Leftrightarrow \mathbf{1}^{T}\big(\big[Y^{c^{\prime}}\big] - \mathbf{X}\bm{\beta}\big) + \mathbf{1}^{T}\big(\big[Y^{L^{\prime}}\big] - \mathbf{X}\bm{\beta}\theta - \mathbf{1}\lambda\big) + \mathbf{1}^{T}\big(\big[Y^{L^{\prime}}\big] - \mathbf{X}\bm{\beta}\delta - \mathbf{1}\mu\big) = 0
\end{align*}
\begin{equation}\label{12}
\Leftrightarrow \mathbf{1}^{T}\big(Y^{c^{\prime}} - \mathbf{X}\bm{\beta}\big) = - \mathbf{1}^{T}\big(Y^{L^{\prime}} - \mathbf{X}\bm{\beta}\theta - \mathbf{1}\lambda\big) - \mathbf{1}^{T}\big(Y^{R^{\prime}} - \mathbf{X}\bm{\beta}\delta - \mathbf{1}\mu\big)
\end{equation}
\begin{equation}\label{13}
\dfrac{\partial \mathcal{S}(\bm{\beta}, \theta, \delta, \lambda, \mu) }{\partial \theta} = 0 \Leftrightarrow - (Y^{c^{\ast}})^{T}\big[Y^{L^{\prime}}\big] + (Y^{c^{\ast}})^{T}(Y^{c^{\ast}})\theta + (Y^{c^{\ast}})^{T}\mathbf{1}\lambda = 0
\end{equation}
\begin{equation}\label{15}
\dfrac{\partial \mathcal{S}(\bm{\beta}, \theta, \delta, \lambda, \mu) }{\partial \delta} = 0 \Leftrightarrow - (Y^{c^{\ast}})^{T}\big[Y^{R^{\prime}}\big] + (Y^{c^{\ast}})^{T}(Y^{c^{\ast}})\delta + (Y^{c^{\ast}})^{T}\mathbf{1}\mu = 0
\end{equation}
\begin{equation}
\eqref{13}\Leftrightarrow (Y^{c^{\ast}})^{T} \bigg\{Y^{L^{\prime}} - Y^{c^{\ast}}\theta - \mathbf{1}\lambda\bigg\}=0
\end{equation}
\begin{equation}\label{16}
\Leftrightarrow (Y^{c^{\ast}})^{T}\big( Y^{L^{\prime}} - Y^{L^{\ast}}\big)=0
\end{equation}
\begin{equation}\label{17}
\Leftrightarrow \mathbf{1}^{T}\big( Y^{L^{\prime}} - Y^{L^{\ast}}\big)=0
\end{equation}
\begin{equation}
\eqref{15}\Leftrightarrow (Y^{c^{\ast}})^{T} \bigg\{Y^{R^{\prime}} - Y^{c^{\ast}}\delta - \mathbf{1}\mu\bigg\}=0
\end{equation}
\begin{equation}\label{18}
\Leftrightarrow (Y^{c^{\ast}})^{T} \bigg (Y^{R^{\prime}} - Y^{R^{\ast}}\bigg)=0
\end{equation}
\begin{equation}\label{19}
\Leftrightarrow \mathbf{1}^{T} \bigg (Y^{R^{\prime}} - Y^{R^{\ast}}\bigg)=0
\end{equation}
\begin{equation}\label{20}
\eqref{12} \Leftrightarrow \mathbf{1}^{T}\big(Y^{c^{\prime}} - Y^{c^{\ast}}\big) = - \mathbf{1}^{T}\big(Y^{L^{\prime}} - Y^{L^{\ast}}\big) - \mathbf{1}^{T}\big(Y^{R^{\prime}} - Y^{R^{\ast}}\big)
\end{equation}
\begin{equation*}
\eqref{17}-\eqref{19}\; \text{in}\; \eqref{20}\Rightarrow \mathbf{1}^{T}\big(Y^{c^{\prime}} - Y^{c^{\ast}}\big) \quad \square .
\end{equation*}
\item[\textbf{3)}]
\begin{align}\label{21}\nonumber
\bigg (Y^{L^{\prime}} - Y^{L^{\ast}}\bigg)^{T}(Y^{L^{\ast}}) & = \bigg (Y^{L^{\prime}} - Y^{L^{\ast}}\bigg)^{T}\bigg(Y^{c^{\ast}}\theta + \mathbf{1}\lambda \bigg)\\ 
& = \bigg (Y^{L^{\prime}} - Y^{L^{\ast}}\bigg)^{T}Y^{c^{\ast}}\theta + \bigg (Y^{L^{\prime}} - Y^{L^{\ast}}\bigg)^{T}\mathbf{1}\lambda 
\end{align}
\begin{align}
& \eqref{16}\Rightarrow  \bigg (Y^{L^{\prime}} - Y^{L^{\ast}}\bigg)^{T}Y^{c^{\ast}}\theta = 0\\
& \eqref{17}\Rightarrow  \bigg (Y^{L^{\prime}} - Y^{L^{\ast}}\bigg)^{T}\mathbf{1}\lambda = 0\\
& \eqref{16}-\eqref{17} \; \text{in}\; \eqref{21} \Rightarrow \bigg (Y^{L^{\prime}} - Y^{L^{\ast}}\bigg)^{T}(Y^{L^{\ast}}) = 0
\end{align}
Similarly
\begin{align}\label{22}\nonumber
\bigg (Y^{R^{\prime}} - Y^{R^{\ast}}\bigg)^{T}(Y^{R^{\ast}}) & = \bigg (Y^{R^{\prime}} - Y^{R^{\ast}}\bigg)^{T}\bigg(Y^{c^{\ast}}\delta + \mathbf{1}\mu \bigg)\\ 
& = \bigg (Y^{R^{\prime}} - Y^{R^{\ast}}\bigg)^{T}Y^{c^{\ast}}\delta + \bigg (Y^{R^{\prime}} - Y^{R^{\ast}}\bigg)^{T}\mathbf{1}\mu 
\end{align}
\begin{equation}
\eqref{18}-\eqref{19}\; \text{in}\; \eqref{22}\Rightarrow \bigg (Y^{R^{\prime}} - Y^{R^{\ast}}\bigg)^{T}(Y^{R^{\ast}}) \quad\square .
\end{equation}
\end{enumerate}
\end{proof}
\begin{definition}
For a set of crisp observations $\mathbf{X}_{n\times (p+1)}$ and by considering the hybrid log-Poisson model built over the classical one \eqref{6} in loss reserving framework.\par
We define for the fuzzy output $\tilde{Y}_{ij}=(Y_{ij}^{L}, Y_{ij}^{c}, Y_{ij}^{R}), \; (i,j)\in \{1, \ldots, k\} \times \{ 1, \ldots , k-i+1\}$ the following concepts :
\begin{enumerate}
\item[\textbf{1)}] The fuzzy total sum of squares
\begin{equation}
FSST=\sum\limits_{i=1}^{k}\sum\limits_{j=1}^{k-i+1}\big(Y_{ij}^{c^{\prime}} - \overline{Y}_{\ln}^{c}\big)^{2} + \sum\limits_{i=1}^{k}\sum\limits_{j=1}^{k-i+1}\big(Y_{ij}^{L^{\prime}} - \overline{Y}_{\ln}^{L}\big)^{2} + \sum\limits_{i=1}^{k}\sum\limits_{j=1}^{k-i+1}\big(Y_{ij}^{R^{\prime}} - \overline{Y}_{\ln}^{R}\big)^{2}
\end{equation}
where
\begin{align}
\overline{Y}_{\ln}^{c} & = \dfrac{1}{n}\sum\limits_{i=1}^{k}\sum\limits_{j=1}^{k-i+1}Y_{ij}^{c^{\prime}}\\
\overline{Y}_{\ln}^{L} & = \dfrac{1}{n}\sum\limits_{i=1}^{k}\sum\limits_{j=1}^{k-i+1}Y_{ij}^{L^{\prime}}\\
\overline{Y}_{\ln}^{R} & = \dfrac{1}{n}\sum\limits_{i=1}^{k}\sum\limits_{j=1}^{k-i+1}Y_{ij}^{R^{\prime}}\\
                     n & = \dfrac{1}{2}k(k+1)
\end{align}
\item[\textbf{2)}] The fuzzy sum of the squares of the regression
\begin{equation}
FSSR=\sum\limits_{i=1}^{k}\sum\limits_{j=1}^{k-i+1}\big(Y_{ij}^{c^{\ast}} - \overline{Y}_{\ln}^{c}\big)^{2} + \sum\limits_{i=1}^{k}\sum\limits_{j=1}^{k-i+1}\big(Y_{ij}^{L^{\ast}} - \overline{Y}_{\ln}^{L}\big)^{2} + \sum\limits_{i=1}^{k}\sum\limits_{j=1}^{k-i+1}\big(Y_{ij}^{R^{\ast}} - \overline{Y}_{\ln}^{R}\big)^{2}
\end{equation}
\item[\textbf{3)}] The fuzzy sum of the squares of errors
\begin{equation}
FSSE=\sum\limits_{i=1}^{k}\sum\limits_{j=1}^{k-i+1}\big(Y_{ij}^{c^{\prime}} - Y_{ij}^{c^{\ast}}\big)^{2} + \sum\limits_{i=1}^{k}\sum\limits_{j=1}^{k-i+1}\big(Y_{ij}^{L^{\prime}} - Y_{ij}^{L^{\ast}}\big)^{2} + \sum\limits_{i=1}^{k}\sum\limits_{j=1}^{k-i+1}\big(Y_{ij}^{R^{\prime}} - Y_{ij}^{R^{\ast}}\big)^{2}
\end{equation}
\end{enumerate}
\end{definition}
\begin{theorem}\label{prop2}
	Let us consider a set of crisp observations $\mathbf{X}_{n\times (p+1)}$ and fuzzy output $\tilde{Y}_{ij}=(Y_{ij}^{L}, Y_{ij}^{c}, Y_{ij}^{R}), \; (i,j)\in \{1, \ldots, k\} \times \{ 1, \ldots , k-i+1\}$. By considering the hybrid log-Poisson model built over the classical one \eqref{6} in loss reserving framework, the following relationship holds :
	\begin{equation}
	FSST = FSSR + FSSE
	\end{equation} 
\end{theorem}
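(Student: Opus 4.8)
The plan is to prove the identity componentwise---separately for the centres and for the two spreads---and then add the three pieces. For each cell $(i,j)$ I would write
\[
Y_{ij}^{c^{\prime}} - \overline{Y}_{\ln}^{c} = \big(Y_{ij}^{c^{\prime}} - Y_{ij}^{c^{\ast}}\big) + \big(Y_{ij}^{c^{\ast}} - \overline{Y}_{\ln}^{c}\big),
\]
together with the analogous identities obtained by replacing $c$ with $L$ and with $R$. Squaring and summing over $i=1,\dots,k$ and $j=1,\dots,k-i+1$ gives, for the centres,
\[
\sum_{i,j}\big(Y_{ij}^{c^{\prime}} - \overline{Y}_{\ln}^{c}\big)^{2} = \sum_{i,j}\big(Y_{ij}^{c^{\prime}} - Y_{ij}^{c^{\ast}}\big)^{2} + \sum_{i,j}\big(Y_{ij}^{c^{\ast}} - \overline{Y}_{\ln}^{c}\big)^{2} + 2\,C_{c},
\]
where $C_{c} = \sum_{i,j}\big(Y_{ij}^{c^{\prime}} - Y_{ij}^{c^{\ast}}\big)\big(Y_{ij}^{c^{\ast}} - \overline{Y}_{\ln}^{c}\big)$, and the two analogous identities for the left and right spreads with cross terms $C_{L}$ and $C_{R}$. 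Adding the three identities yields exactly $FSST = FSSE + FSSR + 2\,(C_{c} + C_{L} + C_{R})$, so it remains only to show $C_{c} = C_{L} = C_{R} = 0$.

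Next I would split each cross term using the constant mean, e.g.
\[
C_{c} = \sum_{i,j}\big(Y_{ij}^{c^{\prime}} - Y_{ij}^{c^{\ast}}\big)Y_{ij}^{c^{\ast}} \;-\; \overline{Y}_{\ln}^{c}\sum_{i,j}\big(Y_{ij}^{c^{\prime}} - Y_{ij}^{c^{\ast}}\big).
\]
For the centres, the first sum is precisely the quantity shown to vanish in part \textbf{1)} of Proposition \ref{prop} (equivalently $(Y^{c^{\ast}})^{T}(Y^{c^{\prime}} - Y^{c^{\ast}})=0$), and the second sum vanishes by the first bullet of part \textbf{2)}; hence $C_{c}=0$. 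For the left spread, $C_{L} = \sum_{i,j}\big(Y_{ij}^{L^{\prime}} - Y_{ij}^{L^{\ast}}\big)Y_{ij}^{L^{\ast}} - \overline{Y}_{\ln}^{L}\sum_{i,j}\big(Y_{ij}^{L^{\prime}} - Y_{ij}^{L^{\ast}}\big)$, and here the first sum is zero by the first bullet of part \textbf{3)} while the second is zero by the second bullet of part \textbf{2)}, so $C_{L}=0$. The right spread is treated identically, using the second bullet of part \textbf{3)} and the third bullet of part \textbf{2)}, giving $C_{R}=0$. Substituting back into the summed identity yields $FSST = FSSR + FSSE$.

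The argument is entirely routine once Proposition \ref{prop} is available; the only thing requiring care is the bookkeeping---matching each of the three cross products to the correct orthogonality relation, since the centre term draws on part \textbf{1)} whereas the two spread terms draw on part \textbf{3)}, and all three additionally need the ``residuals sum to zero'' relations of part \textbf{2)}. One should also keep in mind the implicit hypothesis that the interpolated vectors $Y^{c^{\ast}}, Y^{L^{\ast}}, Y^{R^{\ast}}$ are evaluated at the fuzzy least-squares optimum of Theorem \ref{pro1}, which is exactly the regime in which the first-order conditions underlying Proposition \ref{prop} hold; outside that regime the cross terms need not vanish and the decomposition can fail.
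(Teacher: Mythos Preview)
Your proof is correct and follows essentially the same route as the paper's own argument: both decompose each of the three components (centre, left spread, right spread) via $Y^{\prime}-\overline{Y}=(Y^{\prime}-Y^{\ast})+(Y^{\ast}-\overline{Y})$, expand the squared norm, and kill the resulting cross terms by invoking exactly the same parts of Proposition~\ref{prop} that you cite. The only cosmetic difference is that the paper works in vector/matrix notation ($\|Y^{c^{\prime}}-\mathbf{1}\overline{Y}_{\ln}^{c}\|^{2}$, etc.) while you write everything with double sums.
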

\begin{proof}
\begin{equation}\label{23}
FSST = || Y^{c^{\prime}} - \mathbf{1}\overline{Y}_{\ln}^{c}||^{2} + || Y^{L^{\prime}} - \mathbf{1}\overline{Y}_{\ln}^{L}||^{2} + || Y^{R^{\prime}} - \mathbf{1}\overline{Y}_{\ln}^{R}||^{2}
\end{equation}
\begin{align*}
|| Y^{c^{\prime}} - \mathbf{1}\overline{Y}_{\ln}^{c}||^{2} & =|| Y^{c^{\prime}} - Y^{c^{\ast}} + Y^{c^{\ast}} - \mathbf{1}\overline{Y}_{\ln}^{c}||^{2}\\
& = \big(Y^{c^{\prime}}- Y^{c^{\ast}}\big)^{T}\big(Y^{c^{\prime}}- Y^{c^{\ast}}\big) + 2\big(Y^{c^{\prime}}- Y^{c^{\ast}}\big)^{T}\big(Y^{c^{\ast}} - \mathbf{1}\overline{Y}_{\ln}^{c}\big) + \big(Y^{c^{\ast}} - \mathbf{1}\overline{Y}_{\ln}^{c}\big)^{T}\big(Y^{c^{\ast}} - \mathbf{1}\overline{Y}_{\ln}^{c}\big)
\end{align*}
But
\begin{align*}
2\big(Y^{c^{\prime}}- Y^{c^{\ast}}\big)^{T}\big(Y^{c^{\ast}} - \mathbf{1}\overline{Y}_{\ln}^{c}\big) & = 2\big(Y^{c^{\prime}}- Y^{c^{\ast}}\big)^{T}Y^{c^{\ast}} - 2\big(Y^{c^{\prime}}- Y^{c^{\ast}}\big)^{T}\mathbf{1}\overline{Y}_{\ln}^{c}\\
& = 0 \quad \text{(from proposition \ref{prop})}.
\end{align*}
\begin{equation}\label{24}
\Rightarrow || Y^{c^{\prime}} - \mathbf{1}\overline{Y}_{\ln}^{c}||^{2} = || Y^{c^{\prime}}- Y^{c^{\ast}}||^{2} + ||Y^{c^{\ast}} - \mathbf{1}\overline{Y}_{\ln}^{c}||^{2}
\end{equation}
\begin{align*}
|| Y^{L^{\prime}} - \mathbf{1}\overline{Y}_{\ln}^{L}||^{2} & = || Y^{L^{\prime}} - Y^{L^{\ast}} + Y^{L^{\ast}} - \mathbf{1}\overline{Y}_{\ln}^{L}||^{2}\\
& = \big(Y^{L^{\prime}} - Y^{L^{\ast}}\big)^{T}\big(Y^{L^{\prime}} - Y^{L^{\ast}}\big) + \big(Y^{L^{\ast}} - \mathbf{1}\overline{Y}_{\ln}^{L}\big)^{T}\big(Y^{L^{\ast}} - \mathbf{1}\overline{Y}_{\ln}^{L}\big) + 2\big(Y^{L^{\prime}} - Y^{L^{\ast}}\big)^{T}\big(Y^{L^{\ast}} - \mathbf{1}\overline{Y}_{\ln}^{L}\big)
\end{align*}
\begin{align*}
2\big(Y^{L^{\prime}} - Y^{L^{\ast}}\big)^{T}\big(Y^{L^{\ast}} - \mathbf{1}\overline{Y}_{\ln}^{L}\big) & = 2\big(Y^{L^{\prime}} - Y^{L^{\ast}}\big)^{T}Y^{L^{\ast}} - 2\big(Y^{L^{\prime}} - Y^{L^{\ast}}\big)^{T}\mathbf{1}\overline{Y}_{\ln}^{L}\\
& = 0 \quad \text{(from proposition \ref{prop})}.
\end{align*}
\begin{equation}\label{25}
\Rightarrow || Y^{L^{\prime}} - \overline{Y}_{\ln}^{L}||^{2} = || Y^{L^{\prime}}- Y^{L^{\ast}}||^{2} + ||Y^{L^{\ast}} - \mathbf{1}\overline{Y}_{\ln}^{L}||^{2}
\end{equation}
\begin{align*}
|| Y^{R^{\prime}} - \mathbf{1}\overline{Y}_{\ln}^{R}||^{2} & = || Y^{R^{\prime}} - Y^{R^{\ast}} + Y^{R^{\ast}} - \mathbf{1}\overline{Y}_{\ln}^{R}||^{2}\\
& = \big(Y^{R^{\prime}} - Y^{R^{\ast}}\big)^{T}\big(Y^{R^{\prime}} - Y^{R^{\ast}}\big) + \big(Y^{R^{\ast}} - \mathbf{1}\overline{Y}_{\ln}^{R}\big)^{T}\big(Y^{R^{\ast}} - \mathbf{1}\overline{Y}_{\ln}^{R}\big) + 2\big(Y^{R^{\prime}} - Y^{R^{\ast}}\big)^{T}\big(Y^{R^{\ast}} - \mathbf{1}\overline{Y}_{\ln}^{R}\big)
\end{align*}
\begin{align*}
2\big(Y^{R^{\prime}} - Y^{R^{\ast}}\big)^{T}\big(Y^{R^{\ast}} - \mathbf{1}\overline{Y}_{\ln}^{R}\big) & = 2\big(Y^{R^{\prime}} - Y^{R^{\ast}}\big)^{T}Y^{R^{\ast}} - 2\big(Y^{R^{\prime}} - Y^{R^{\ast}}\big)^{T}\mathbf{1}\overline{Y}_{\ln}^{R}\\
& = 0 \quad \text{(from proposition \ref{prop})}.
\end{align*}
\begin{equation}\label{26}
\Rightarrow || Y^{R^{\prime}} - \mathbf{1}\overline{Y}_{\ln}^{R}||^{2} = || Y^{R^{\prime}}- Y^{R^{\ast}}||^{2} + ||Y^{R^{\ast}} - \mathbf{1}\overline{Y}_{\ln}^{R}||^{2}
\end{equation}
\begin{align}\nonumber
\eqref{23}-\eqref{24}-\eqref{25}-\eqref{26}\Rightarrow FSST & =|| Y^{c^{\prime}}- Y^{c^{\ast}}||^{2} + ||Y^{c^{\ast}} - \mathbf{1}\overline{Y}_{\ln}^{c}||^{2} + || Y^{L^{\prime}}- Y^{L^{\ast}}||^{2} \\
& + ||Y^{L^{\ast}} - \mathbf{1}\overline{Y}_{\ln}^{L}||^{2} + || Y^{R^{\prime}}- Y^{R^{\ast}}||^{2} + ||Y^{R^{\ast}} - \mathbf{1}\overline{Y}_{\ln}^{R}||^{2}\\
FSST & = FSSR + FSSE \quad \square .
\end{align}
\end{proof}
\begin{definition}\label{def2}
Let us consider a set of crisp observations $\mathbf{X}_{n\times (p+1)}$ and fuzzy output $\tilde{Y}_{ij} = (Y_{ij}^{L},Y_{ij}^{c}, Y_{ij}^{R})$ in the hybrid log-Poisson regression model \eqref{6} in loss reserving framework.\par
We define the fuzzy goodness of fit index by :
\begin{align}
\tilde{R}_{F}^{2} & = \dfrac{FSSR}{FSST}=1 - \dfrac{FSSE}{FSST}\\
                  & = \dfrac{|| Y^{c^{\ast}} - \mathbf{1}\overline{Y}_{\ln}^{c}||^{2} + || Y^{L^{\ast}} - \mathbf{1}\overline{Y}_{\ln}^{L}||^{2} + || Y^{R^{\ast}} - \mathbf{1}\overline{Y}_{\ln}^{R}||^{2}}{|| Y^{c^{\prime}} - \mathbf{1}\overline{Y}_{\ln}^{c}||^{2} + || Y^{L^{\prime}} - \mathbf{1}\overline{Y}_{\ln}^{L}||^{2} + || Y^{R^{\prime}} - \mathbf{1}\overline{Y}_{\ln}^{R}||^{2}}\\
                  & = 1 - \dfrac{|| Y^{c^{\prime}} - Y^{c^{\ast}}||^{2} + || Y^{L^{\prime}} - Y^{L^{\ast}}||^{2} + || Y^{R^{\prime}} - Y^{R^{\ast}}||^{2}}{|| Y^{c^{\prime}} - \mathbf{1}\overline{Y}_{\ln}^{c}||^{2} + || Y^{L^{\prime}} - \mathbf{1}\overline{Y}_{\ln}^{L}||^{2} + || Y^{R^{\prime}} - \mathbf{1}\overline{Y}_{\ln}^{R}||^{2}}
\end{align}
Using theorem \ref{prop2}, we notice that $\tilde{R}_{F}^{2} \in [0,1]$.
\end{definition}
\section{Algorithm for implementation of the new model}
In this section, we provide the algorithm behind our \texttt{R} program that will allow us to estimate the parameters of the new model and to compute the outstanding loss reserves.\par
\begin{description}
	\item[\textbf{1)Modelling the incremental losses $Y_{ij}$ with the log-Poisson regression :}] 
	We estimate the incremental losses $Y_{ij}$ through log-Poisson regression, i.e
	\begin{align*}
	& Y_{ij} \sim \mathcal{P}(e^{\nu_{ij}}), \quad\text{where}\; \nu_{ij}=\tau + \alpha_{i} + \gamma_{j}\\
	\Rightarrow & \hat{Y}_{ij} = e^{\mathbf{X}_{i}^{T}\hat{\beta}}, \; (i,j)\in \{1,\ldots ,k \}\times \{ 1,\ldots, k-i+1\}
	\end{align*}
	\item[\textbf{2) Estimation Procedure of $Y_{ij}^{R}, Y_{ij}^{c}, Y_{ij}^{L}$ :}]
	We assume that in the run off triangle $Y_{ij}$ have been modelled as follows :
	\begin{equation}
	Y_{ij} \sim \mathcal{P}(\varphi_{ij}), \;\varphi_{ij}=e^{\tau + \alpha_{i} + \gamma_{j}}
	\end{equation}  
	we compute now the Pearson residuals
    \begin{align*}
    & \hat{r}_{p_{ij}}=\dfrac{Y_{ij} - \hat{\varphi}_{ij}}{\sqrt{\hat{\varphi}_{ij}}}\\
    & \text{where}\;  \hat{\varphi}_{ij} = e^{\hat{\tau} + \hat{\alpha}_{i} + \hat{\gamma}_{j}} \; \text{and}\; (i,j)\in \{1,\ldots ,k \}\times \{ 1,\ldots, k-i+1\}
    \end{align*}
	The adjusted residuals (England and Verrall, 1999) are computed as follows :
	\begin{equation}
	\hat{r}^{\prime}_{p_{ij}}=\sqrt{\dfrac{n}{n-p}}\hat{r}_{p_{ij}}, \; \text{where}\; n=\dfrac{k(k+1)}{2}, \; p=2k-1
	\end{equation}
    Our idea to construct the fuzzy output $\tilde{Y}_{ij}$ is as following :
    \begin{itemize}
    	\item [$\bullet$] $Y^{c}_{ij} = Y_{ij}$
    	\item [$\bullet$] $Y^{L}_{ij} = Y_{ij} - \dfrac{|\hat{r}^{\prime}_{p_{ij}}|}{2}$
    	\item [$\bullet$] $Y^{R}_{ij} = Y_{ij} + \dfrac{|\hat{r}^{\prime}_{p_{ij}}|}{2}$
    \end{itemize}

	\item[\textbf{3) Estimation of fuzzy parameters in the hybrid log-Poisson model :}]
	We use the expression of  $\hat{\bm{\beta}}, \hat{\theta}, \hat{\delta}, \hat{\lambda}$ and $\hat{\mu}$ given in theorem \ref{pro1} to estimate the fuzzy parameters of the new model in equation \eqref{6}. For that, an iterative algorithm have been written under \texttt{R} to estimate those parameters.  
	\item[\textbf{4)Estimation of the goodness of fit :}] From step \textbf{3)}, we can get
	\begin{align}
	\hat{Y}^{c^{\ast}} & = \mathbf{X}\hat{\bm{\beta}}\\
	\hat{Y}^{L^{\ast}} & \mathbf{X}\hat{\bm{\beta}}\hat{\theta} + \mathbf{1}\hat{\lambda}\\
	\hat{Y}^{R^{\ast}} & \mathbf{X}\hat{\bm{\beta}}\hat{\delta} + \mathbf{1}\hat{\mu}
	\end{align}
	Then from definition \ref{def2}, the estimation of $\tilde{R}^{2}_{F}$ is given by 
	\begin{equation}
	\tilde{R}^{2}_{F} = 1 - \dfrac{|| Y^{c^{\prime}} - \hat{Y}^{c^{\ast}}||^{2} + || Y^{L^{\prime}} - \hat{Y}^{L^{\ast}}||^{2} + || Y^{R^{\prime}} - \hat{Y}^{R^{\ast}}||^{2}}{|| Y^{c^{\prime}} - \mathbf{1}\overline{Y}_{\ln}^{c}||^{2} + || Y^{L^{\prime}} - \mathbf{1}\overline{Y}_{\ln}^{L}||^{2} + || Y^{R^{\prime}} - \mathbf{1}\overline{Y}_{\ln}^{R}||^{2}}
	\end{equation}
	\item[\textbf{5)Estimation of outstandings reserves :}]
	Et this step, we predict $\tilde{Y}_{ij}$ using the new model  as follows :
	\begin{align}
	\hat{Y}_{ij}^{c^{\prime}} & = \mathbf{X}_{i}^{T}\hat{\bm{\beta}} \Rightarrow \hat{Y}_{ij}^{c} = e^{\mathbf{X}_{i}^{T}\hat{\bm{\beta}}}\\
	\hat{Y}_{ij}^{L^{\prime}} & = \mathbf{X}_{i}^{T}\hat{\bm{\beta}}\hat{\theta} + \hat{\lambda} \Rightarrow \hat{Y}_{ij}^{L} = e^{\mathbf{X}_{i}^{T}\hat{\bm{\beta}}\hat{\theta} + \hat{\lambda}}\\
	\hat{Y}_{ij}^{R^{\prime}} & = \mathbf{X}_{i}^{T}\hat{\bm{\beta}}\hat{\delta} + \hat{\mu} \Rightarrow \hat{Y}_{ij}^{R} = e^{\mathbf{X}_{i}^{T}\hat{\bm{\beta}}\hat{\delta} + \hat{\mu}}
	\end{align}
	where
	\begin{align*}
	(i,j) & \in \{1, \ldots , k\}\times \{ k-i+1, \ldots , k\}\\
	\hat{\bm{\beta}} & = \big( \hat{\tau}, \hat{\alpha}_{2}, \ldots , \hat{\alpha}_{k}, \hat{\gamma}_{2}, \ldots, \hat{\gamma}_{k}\big)^{T}\in \mathbb{R}^{2k-1}\\
	\hat{\tilde{Y}}_{ij} & = (\hat{Y}_{ij}^{L}, \hat{Y}_{ij}^{c}, \hat{Y}_{ij}^{R})
	\end{align*}
	Then the fuzzy total loss reserve is computed as follow:
    \begin{align}
    	\tilde{R}_{T.Res} & = \sum\limits_{i=1}^{k}\sum\limits_{j=k-i+1}^{k}\hat{\tilde{Y}}_{ij}\\
    	                  & =\big(\sum\limits_{i=1}^{k}\sum\limits_{j=k-i+1}^{k}\hat{Y}_{ij}^{L},\; \sum\limits_{i=1}^{k}\sum\limits_{j=k-i+1}^{k}\hat{Y}_{ij}^{c},\; \sum\limits_{i=1}^{k}\sum\limits_{j=k-i+1}^{k}\hat{Y}_{ij}^{R}\big)\\
    	                  & = \big(R^{L}_{T.Res}, R^{c}_{T.Res},R^{R}_{T.Res}\big)
    \end{align}
In this article, we use the concept of expected value of FN (de Campos Ib\'{a}\~{n}ez and Mu\~{n}oz, 1989) to move from the fuzzy value of total loss reserve $\tilde{R}_{T.Res}$ to the crisp value of total loss reserve $R_{T.Res}$. Denote $\mathbb{E}_{F}$ that expected value. \par
The $h-$level of fuzzy total loss reserve is defined as following :
\begin{equation}
\tilde{R}_{T.Res}(h)=\big[h\cdot R^{c}_{T.Res} - (1-h)\cdot R^{L}_{T.Res};h\cdot R^{c}_{T.Res} + (1-h)\cdot R^{R}_{T.Res}\big]
\end{equation}
Then the expected value of FN $\tilde{R}_{T.Res}$ is defined as follows :
\begin{equation}\label{29}
\mathbb{E}_{F}(\tilde{R}_{T.Res}, \pi) = (1-\pi)\int_{0}^{1}\big(h\cdot R^{c}_{T.Res} - (1-h)\cdot R^{L}_{T.Res}\big) dh + \pi\int_{0}^{1}\big(h\cdot R^{c}_{T.Res} + (1-h)\cdot R^{R}_{T.Res}\big) dh 
\end{equation}
where $\pi$ is the decision-maker risk aversion parameter $(0 \leqslant \pi \leqslant 1)$. From \eqref{29}, we have
\begin{align*}
\mathbb{E}_{F}(\tilde{R}_{T.Res}, \pi) & = (1-\pi)\int_{0}^{1}\big( h\cdot R^{c}_{T.Res} - R^{L}_{T.Res} + h\cdot R^{L}_{T.Res}\big)dh + \pi \int_{0}^{1}\big( h\cdot R^{c}_{T.Res} + R^{R}_{T.Res} - h\cdot R^{R}_{T.Res}\big)dh\\
& = (1-\pi)\int_{0}^{1}h(R^{c}_{T.Res} + R^{L}_{T.Res})dh - (1-\pi)\int_{0}^{1}R^{L}_{T.Res} dh + \pi\int_{0}^{1}h(R^{c}_{T.Res} - R^{R}_{T.Res})dh\\
& + \pi\int_{0}^{1}R^{R}_{T.Res} dh\\
& = \dfrac{(1-\pi)(R^{c}_{T.Res} + R^{L}_{T.Res})}{2} - (1-\pi)R^{L}_{T.Res} + \dfrac{\pi (R^{c}_{T.Res} - R^{R}_{T.Res})}{2} + \pi R^{R}_{T.Res}\\
& = \dfrac{(1-\pi)(R^{c}_{T.Res} - R^{L}_{T.Res})}{2} + \dfrac{\pi (R^{c}_{T.Res} + R^{R}_{T.Res})}{2}
\end{align*}

\end{description}
\section{Numerical Example}
In this section, we apply both the classical log-Poisson regression (Mack, 1991) and and the hybrid model estimated by a fuzzy least square procedure on a real data. Let us use the numerical example from de Andr\'{e}s S\'{a}nchez (2006).\par
\begin{table}[!h]
	\centering
	\begin{tabular}{|c|c|ccccc|}
		\cline{3-7}
		\multicolumn{2}{c|}{} &	\multicolumn{5}{|c|}{\centering Development Year}\\ \hline
		\multicolumn{2}{|c|}{$i/j$} &  0 &  1 &  2 &  3 &  4 \\ \hline
		\multirow{4}{3mm}{\rotatebox{90}{\bf Origin Year}} & 2000  & 1120 & 2090 & 2610 & 2920 & 3130 \\
		&  2001  & 1030 & 1920 & 2370 & 2710 & \\
		&  2002  & 1090 & 2140 & 2610 &  &  \\
		&  2003  & 1300 & 2650 & & & \\
		&  2004  & 1420  & & & & \\ \hline
	\end{tabular}
	\caption{Numerical example from de Andr\'{e}s S\'{a}nchez (2006)}\label{tab2}
\end{table}
According to the $\mathbf{1^{st}}$ step of the algorithm, we perform the classical log-Poisson regression on the data from de Andr\'{e}s S\'{a}nchez (2006) using \textbf{R} software. The estimated parameters are displayed in table \ref{tab3}\\
\begin{table}[!h]
	\centering
	\begin{tabular}{|c|c|c|c|}
		\hline
	 $(\hat{\alpha}_{i})_{2 \leqslant i \leqslant 5}$ & $(\hat{\gamma}_{j})_{2 \leqslant j \leqslant 5}$  & $p-\text{value} (\hat{\alpha}_{i})$ & $p-\text{value}(\hat{\gamma}_{j})$  \\ \hline \hline
     $-0.08473$ & $0.66182$ & $4.23 \times e^{-08}$ & $< 2 \times e^{-16}$ \\
      $0.00587$ & $0.86503$ & $0.741$    & $< 2 \times e^{-16}$ \\
      $0.20725$ & $0.98780$ & $< 2\times e^{-16}$  & $< 2 \times e^{-16}$ \\
      $0.26203$ & $1.05240$ & $4.72 \times e^{-16}$ & $< 2 \times e^{-16}$ \\\hline
	  \multicolumn{2}{|c}{ $\hat{\tau}= 6.99639$} & \multicolumn{2}{c|}{ $p-\text{value}(\hat{\tau}) =< 2 \times e^{-16} $}\\\hline 
	  \multicolumn{2}{|c|}{$\mathbf{R^{2}} = 0.9621253$} & \multicolumn{2}{c|}{$\text{\textbf{Total Reserve}} = 33634.89$}\\ \hline
	\end{tabular}
	\caption{Estimated parameters}\label{tab3}
\end{table} \\
From table \ref{tab3} and With a threshold of $1\%$, we conclude that except $\hat{\alpha}_{3}$, the others coefficients are statistically significant. The goodness of fit of the model to the data is good, since $ \mathbf{R^2}=\mathbf{96.21\%}$ and the estimation of the total loss reserve is $\mathbf{33634.89}$\par
Now let us test if the model performed is adapted to a statistical perspective through a dispersion test (see table \ref{tab4}).\\
\begin{table}[!h]
	\centering
	\begin{tabular}{|c|}
		\hline
		 Overdispersion test  \\ \hline \hline
		 $H_{0} : \; \psi = 1$ \\ 
		 $H_{1} : \; \psi > 1$ \\ 
		 $Z=-6.3414$                  \\ 
		 $p-\text{value} = 1$    \\ \hline
	\end{tabular}
	\caption{Overdispersion test}\label{tab4}
\end{table}
With a threshold of $1\%$, we do not reject the null hypothesis, i.e $p-value > 1\%$. Therefore we don't need to perform a quasi-Poisson regression.\par
Let us perform the steps \textbf{2),3)} and \textbf{4)} of the algorithm. The iterative algorithm to estimate parameters $\bm{\beta}, \theta, \delta, \lambda, \mu$ converges after $12112$ iterations and the estimation of parameters and fuzzy output are displayed in table \ref{tab5} and in equation \eqref{28}.\\
\begin{table}[!h]
	\centering
	\begin{tabular}{|c|c|c|}
		\hline
	$\hat{Y}_{11}^{L}=7.002797$	& $\hat{Y}_{11}^{c}=7.003261$ & $\hat{Y}_{11}^{R}=7.003733$  \\  \hline
	$\hat{Y}_{12}^{L}=7.661440$	& $\hat{Y}_{12}^{c}=7.661621$ & $\hat{Y}_{12}^{R}=7.661801$  \\  \hline
	$\hat{Y}_{13}^{L}=7.862986$	& $\hat{Y}_{13}^{c}=7.863080$ & $\hat{Y}_{13}^{R}=7.863171$  \\  \hline
	$\hat{Y}_{14}^{L}=7.984444$	& $\hat{Y}_{14}^{c}=7.984487$ & $\hat{Y}_{14}^{R}=7.984523$  \\  \hline
	$\hat{Y}_{15}^{L}=8.048776$	& $\hat{Y}_{15}^{c}=8.048791$ & $\hat{Y}_{15}^{R}=8.048798$  \\  \hline
	$\hat{Y}_{21}^{L}=6.917834$	& $\hat{Y}_{21}^{c}=6.918335$ & $\hat{Y}_{21}^{R}=6.918845$  \\  \hline
	$\hat{Y}_{22}^{L}=7.576477$	& $\hat{Y}_{22}^{c}=7.576694$ & $\hat{Y}_{22}^{R}=7.576912$  \\  \hline
	$\hat{Y}_{23}^{L}=7.778023$	& $\hat{Y}_{23}^{c}=7.778154$ & $\hat{Y}_{23}^{R}=7.778282$  \\  \hline
	$\hat{Y}_{24}^{L}=7.899481$	& $\hat{Y}_{24}^{c}=7.899560$ & $\hat{Y}_{24}^{R}=7.899634$  \\  \hline
	$\hat{Y}_{31}^{L}=7.003343$	& $\hat{Y}_{31}^{c}=7.003807$ & $\hat{Y}_{31}^{R}=7.004279$  \\  \hline
	$\hat{Y}_{32}^{L}=7.661985$	& $\hat{Y}_{32}^{c}=7.662166$ & $\hat{Y}_{32}^{R}=7.662346$  \\  \hline
	$\hat{Y}_{33}^{L}=7.863531$	& $\hat{Y}_{33}^{c}=7.863626$ & $\hat{Y}_{33}^{R}=7.863716$  \\  \hline
	$\hat{Y}_{41}^{L}=7.196656$	& $\hat{Y}_{41}^{c}=7.197037$ & $\hat{Y}_{41}^{R}=7.197423$  \\  \hline
	$\hat{Y}_{42}^{L}=7.855299$	& $\hat{Y}_{42}^{c}=7.855397$ & $\hat{Y}_{42}^{R}=7.855490$  \\  \hline
	$\hat{Y}_{51}^{L}=7.258057$	& $\hat{Y}_{51}^{c}=7.258411$ & $\hat{Y}_{51}^{R}=7.258770$  \\ \hline
	\end{tabular}
	\caption{Estimation of output from the hybrid model}\label{tab5}
\end{table}
\begin{align}\label{28}
\hat{\bm{\beta}}=\begin{pmatrix}
 7.0032610203 \\
 -0.0849264367 \\
 0.6583597871 \\
 0.0005455323 \\
 0.8598194144 \\
 0.1937759436\\
 0.9812255202 \\
 0.2551496695 \\
 1.0455297360
\end{pmatrix}
& \quad\hat{\theta} = 1.000429; & \quad\hat{\lambda}=-0.003468438;\quad & \hat{\delta} = 0.9995556;\quad & \hat{\mu} = 0.003584175;\\
\quad & \tilde{R}_{F}^{2}=0.9986105 \nonumber
\end{align}
From these outputs, we conclude that the hybrid model is more adequate to the classical one since $\tilde{R}_{F}^{2} > R^{2}$.\par
From step \textbf{5)} of our algorithm, we can predict the incremental losses as fuzzy numbers and total fuzzy loss reserve $\mathbf{\tilde{R}_{T.Res}}$. The results are given in table \ref{tab6}.\\
\begin{table}[!h]
	\centering
	\begin{tabular}{|c|c|c|}
		\hline
		$\hat{Y}_{25}^{L}= 2875.014$	& $\hat{Y}_{25}^{c}= 2875.162$ & $\hat{Y}_{25}^{R}= 2875.293$  \\  \hline
		$\hat{Y}_{34}^{L}= 2936.547$	& $\hat{Y}_{34}^{c}= 2936.671$ & $\hat{Y}_{34}^{R}= 2936.777$  \\  \hline
		$\hat{Y}_{35}^{L}= 3131.669$	& $\hat{Y}_{35}^{c}= 3131.716$ & $\hat{Y}_{35}^{R}= 3131.739$  \\  \hline
		$\hat{Y}_{43}^{L}= 3155.319$	& $\hat{Y}_{43}^{c}= 3155.355$ & $\hat{Y}_{43}^{R}= 3155.368$  \\  \hline
		$\hat{Y}_{44}^{L}= 3562.481$	& $\hat{Y}_{44}^{c}= 3562.659$ & $\hat{Y}_{44}^{R}= 3562.803$  \\  \hline
		$\hat{Y}_{45}^{L}= 3798.981$	& $\hat{Y}_{45}^{c}= 3799.279$ & $\hat{Y}_{45}^{R}= 3799.538$  \\  \hline
		$\hat{Y}_{52}^{L}= 2742.702$	& $\hat{Y}_{52}^{c}= 2742.898$ & $\hat{Y}_{52}^{R}= 2743.080$  \\  \hline
		$\hat{Y}_{53}^{L}= 3355.000$	& $\hat{Y}_{53}^{c}= 3355.077$ & $\hat{Y}_{53}^{R}= 3355.127$  \\  \hline
		$\hat{Y}_{54}^{L}= 3787.870$	& $\hat{Y}_{54}^{c}= 3788.162$ & $\hat{Y}_{54}^{R}= 3788.415$  \\  \hline
		$\hat{Y}_{55}^{L}= 4039.332$	& $\hat{Y}_{55}^{c}= 4039.759$ & $\hat{Y}_{55}^{R}= 4040.141$  \\  \hline
		\multicolumn{3}{|c|}{ $\mathbf{\tilde{R}_{T.Res}= (33384.915, 33386.738, 33388.281)}$} \\  \hline
	\end{tabular}
	\caption{Predicted values from the hybrid model}\label{tab6}
\end{table}
From the table \ref{tab6} and using the expected value of fuzzy number for defuzzification purposes, we can compute the crisp value of outstanding loss reserve with the maximum decision-maker risk aversion, i.e $\pi = 1$.
\begin{align*}
\mathbb{E}_{F}(\tilde{R}_{T.Res}, \pi=1)= \hat{R}_{T.Res} & = \dfrac{R^{c}_{T.Res} + R^{R}_{T.Res}}{2}\\
                                                          & = \mathbf{33387.5095}
\end{align*}
From those results we conclude that the new hybrid model we suggested produce best results than the classical one according to the goodness of fit.
\section{Conclusion}
This paper has considered the relevance of Hybrid Models in loss reserving framework, mainly when we are in presence of vague information like in medical insurance (Straub and Swiss, 1988). Those models could give best result compared to stochastic models. In our previous article, we have estimated the hybrid log-Poisson model using a linear programming problem and a numerical example have been made in view to compare that model with the classical log-Poisson regression. In this article, we have suggested a new way to estimate the parameters of the hybrid log-Poisson regression in loss reserving framework using the fuzzy least squares procedure. Furthermore we have developed a goodness of fit index to assess our model. This new model have been applied to a \textit{run-off triangle} in order to estimate the outstanding loss reserve. According to the goodness of fit, the hybrid model approaches the fair value of loss reserve better than the well known log-Poisson regression model (Mack, 1991). However the weakness of that fuzzy least squares estimation of the hybrid log-Poisson regression is its computational part. Since we got an iterative estimator, the \textbf{R} program take some time to converge (\textbf{12112} iterations).
\section*{Acknowledgements}
This work was supported by African Union through Pan African University/Jomo Kenyatta University of Agriculture and Technology. \newpage

\textbf{{\large References :}}\\

Asai, H. T.-S. U.-K. (1982). Linear regression analysis with fuzzy model. IEEE Trans. Systems Man
Cybern, 12:903-907.\par
Bornhuetter, R. L. and Ferguson, R. E. (1972). The actuary and ibnr. In Proceedings of the casualty
actuarial society, volume 59, pages 181-195. \par
Buckley, J. J. (2006). Fuzzy probability and statistics, volume 196. Springer Science \& Business
Media. \par
Celmiņ\v{s}, A. (1987a). Least squares model fitting to fuzzy vector data. Fuzzy sets and systems, 22(3):245-269. \par
Celmiņ\v{s}, A. (1987b). Multidimensional least-squares fitting of fuzzy models. Mathematical Modelling,
9(9):669-690.\par
de Andrés Sánchez, J. (2006). Calculating insurance claim reserves with fuzzy regression. Fuzzy sets
and systems, 157(23):3091-3108.\par
de Andr\'{e}s-S\'{a}nchez, J. (2007). Claim reserving with fuzzy regression and taylors geometric separation method. Insurance: Mathematics and Economics, 40(1):145-163.\par
de Andr\'{e}s-S\'{a}nchez, J. (2012). Claim reserving with fuzzy regression and the two ways of anova.
Applied Soft Computing, 12(8):2435-2441.\par
de Andr\'{e}s S\'{a}nchez, J. (2014). Fuzzy claim reserving in non-life insurance. Comput. Sci. Inf. Syst., 11(2):825-838.\par
de Campos Ib\'{a}\~{n}ez, L. M. and Mu\~{n}oz, A. G. (1989). A subjective approach for ranking fuzzy
numbers. Fuzzy sets and systems, 29(2):145-153.\par
Dubois, D. and Prade, H. (1978). Operations on fuzzy numbers. International Journal of systems
science, 9(6):613-626.\par
Dubois, D. and Prade, H. (1988). Fuzzy numbers : An overview. In Analysis of Fuzzy Information,
pages 3-39. J. C Bezdek.\par
D'Urso, P. (2003). Linear regression analysis for fuzzy/crisp input and fuzzy/crisp output data.
Computational Statistics \& Data Analysis, 42(1-2):47-72.\par
D'Urso, P. and Gastaldi, T. (2000). A least-squares approach to fuzzy linear regression analysis.
Computational Statistics \& Data Analysis, 34(4):427-440.\par
DUrso, P. and Gastaldi, T. (2001). Linear fuzzy regression analysis with asymmetric spreads. In
Advances in Classification and Data Analysis, pages 257-264. Springer.\par
England, P. and Verrall, R. (1999). Analytic and bootstrap estimates of prediction errors in claims
reserving. Insurance: mathematics and economics, 25(3):281-293.\par
England, P. D. and Verrall, R. J. (2002). Stochastic claims reserving in general insurance. British
Actuarial Journal, 8(03):443-518.\par
Ishibuchi, H. and Nii, M. (2001). Fuzzy regression using asymmetric fuzzy coefficients and fuzzified
neural networks. Fuzzy Sets and Systems, 119(2):273-290.\par
Kauffman, A. and Gupta, M. M. (1991). Introduction to fuzzy arithmetic, theory and application.
Lai, Y.-J. and Hwang, C.-L. (1992). Fuzzy mathematical programming. In Fuzzy Mathematical
Programming, pages 74-186. Springer.\par
Linnemann, P. (1984). van eeghen j. (1981): Loss reserving methods. surveys of actuarial studies
no. 1. nationale-nederlanden n.v., rotterdam. 114 pages. ASTIN Bulletin: The Journal of the
International Actuarial Association, 14(01):87-88.\par
Mack, T. (1991). A simple parametric model for rating automobile insurance or estimating ibnr
claims reserves. Astin bulletin, 21(01):93-109.\par
Straub, E. and Swiss, A. A. (1988). Non-life insurance mathematics. Springer.\par
Taylor, G. (1986). Claims reserving in non-life insurance. Insurance series. North-Holland.\par
Taylor, G., McGuire, G., and Greenfield, A. (2003). Loss reserving: past, present and future.\par
W\"{u}thrich, M. V. and Merz, M. (2008). Stochastic claims reserving methods in insurance, volume
435. John Wiley \& Sons.\par
Yager, R. R. and Filev, D. (1999). On ranking fuzzy numbers using valuations. International Journal
of Intelligent Systems, 14(12):1249-1268.\par
Zadeh, L. A. (1965). Fuzzy sets. Information and control, 8(3):338-353.
\end{document}